\documentclass[11pt]{article}

\usepackage[margin=1in]{geometry}
\usepackage{amssymb,amsmath,amsfonts,amsthm}
\usepackage{cleveref}
\usepackage{graphicx}
\usepackage[ruled,vlined]{algorithm2e}
\usepackage{xcolor}
\usepackage{bbold}

\newenvironment{mechanism}[1][htb]
  {% Update algorithm name
   \begin{algorithm}[#1]%
  }{\end{algorithm}}

\crefname{ineq}{inequality}{inequalities}

\let\g\gamma

\let\e\varepsilon

\def\fe{\frac{1}{e}}
\let\cd\cdot

\newcommand{\reals}[0]{\mathbb{R}}
\newcommand{\vv}[0]{\vec{v}}

\newcommand{\vb}[0]{\vec{b}}
\newcommand{\vD}[0]{\vec{D}}

\newcommand{\vS}[0]{\vec{S}}
\newcommand{\vp}[0]{\vec{p}}

\newcommand{\mS}[0]{\mathcal{S}}
\newcommand{\mA}[0]{\mathcal{A}}

\newcommand{\sub}[0]{\subseteq}

\DeclareMathOperator*{\E}{\mathbb{E}}
\DeclareMathOperator*{\argmin}{\arg\min}

\def\sw{\mathrm{sw}}

\def\e{\varepsilon}
\def\S{\mathcal{S}}

\newtheorem{theorem}{Theorem}

\newtheorem{lemma}{Lemma}

\newtheorem{proposition}{Proposition}

\theoremstyle{definition}
\newtheorem{definition}{Definition}
\newtheorem{example}{Example}

\begin{document}

% \title{Mechanism Design for Perturbation Stable Combinatorial Auctions}
\title{Mechanism Design for Perturbation Stable Combinatorial Auctions
\thanks{This work was supported by the Hellenic Foundation for Research
and Innovation (H.F.R.I.) under the ``First Call for H.F.R.I. Research
Projects to support Faculty members and Researchers and the procurement
of high-cost research equipment grant'',  project BALSAM,
HFRI-FM17-1424.}}

\author{
  Giannis Fikioris\\
  School of Electrical and Computer Engineering\\
  National Technical University of Athens, 15780 Athens, Greece\\
  \texttt{fikioris@corelab.ntua.gr}
\and
  Dimitris Fotakis\\
  School of Electrical and Computer Engineering\\
  National Technical University of Athens, 15780 Athens, Greece\\
  \texttt{fotakis@cs.ntua.gr}
}

\maketitle

\begin{abstract} 
Motivated by recent research on combinatorial markets with endowed valuations by (Babaioff et al., EC~2018) and (Ezra et al., EC~2020), we introduce a notion of perturbation stability in Combinatorial Auctions (CAs) and study the extend to which stability helps in social welfare maximization and mechanism design. A CA is \emph{$\gamma$-stable} if the optimal solution is resilient to inflation, by a factor of $\gamma \geq 1$, of any bidder's valuation for any single item. On the positive side, we show how to compute efficiently an optimal allocation for $2$-stable subadditive valuations and that a Walrasian equilibrium exists for $2$-stable submodular valuations. Moreover, we show that a Parallel 2nd Price Auction (P2A) followed by a demand query for each bidder is truthful for general subadditive valuations and results in the optimal allocation for $2$-stable submodular valuations. To highlight the challenges behind optimization and mechanism design for stable CAs, we show that a Walrasian equilibrium may not exist for $\g$-stable XOS valuations for any $\g$, that a polynomial-time approximation scheme does not exist for $(2-\e)$-stable submodular valuations, and that any DSIC mechanism that computes the optimal allocation for stable CAs and does not use demand queries must use exponentially many value queries. We conclude with analyzing the Price of Anarchy of P2A and Parallel 1st Price Auctions (P1A) for CAs with stable submodular and XOS valuations. Our results indicate that the quality of equilibria of simple non-truthful auctions improves only for $\gamma$-stable instances with $\gamma \geq 3$. 

%\keywords{
\medskip\medskip\noindent{{\bf Keywords:} Combinatorial Auctions, Perturbation Stability, Submodular Valuations, Price of Anarchy}
\end{abstract}

\section{Introduction}
\label{s:intro}

Combinatorial auctions appear in many different contexts (e.g., spectrum auctions \cite{Milgrom04}, network routing auctions \cite{HS01}, airport time-slot auctions \cite{RSB82}, etc.) and have been studied extensively (and virtually from every possible aspect) for a few decades (see e.g., \cite{CAbook} and the references therein).

In a combinatorial auction, a set $M$ of $m$ items (or goods) is to be allocated to $n$ bidders. Each bidder $i$ has a \emph{valuation} function $v_i : 2^M \to \reals_{\geq 0}$ that assigns a non-negative value $v_i(S)$ to any $S \sub M$ and quantifies $i$'s preferences over item subsets. Valuation functions are assumed to be non-decreasing (free disposal), i.e., $v_i(S) \geq v_i(S')$, for all $S' \subseteq S$, and normalized, i.e., $v(\emptyset) = 0$. The goal is to compute a partitioning (a.k.a. \emph{allocation}\,) $\mathcal{S} = (S_1, \ldots, S_n)$ of $M$ that maximizes the \emph{social welfare} $\sw(\S) = \sum_{i=1}^n v_i(S_i)$.

Most of the previous work has focused on CAs with either submodular (and XOS) or complement-free valuations. A set function $v: 2^M \to \reals_{\geq 0}$ is \emph{submodular} if for all $S, T \subseteq M$, $v(S) + v(T) \geq v(S \cap T) + v(S \cup T)$, and \emph{subadditive} (a.k.a. \emph{complement-free}\,) if $v(S) + v(T) \geq v(S \cup T)$. A set function $v$ is \emph{XOS} (a.k.a. \emph{fractionally subadditive}, see \cite{Feige}) if there are additive functions $w_k  : 2^M \to \reals_{\geq 0}$ such that for every $S \subseteq M$, $v(S) = \max_{k} \{ w_k(S) \}$. The class of submodular functions is a proper subset of the class of XOS functions, which in turn is a proper subset of subadditive functions.

Since bidder valuations have exponential size in $m$, algorithmic efficiency requires that the bidders communicate their preferences through either value or demand queries. A \emph{value query} specifies a bidder $i$ and a set (or bundle) $S \subseteq M$ and receives its value $v_i(S)$. A \emph{demand query} specifies a bidder $i$, a set $T$ of available items and a price $p_j$ for each available item $j \in T$, and receives a bundle $S \subseteq T$ that maximizes $i$'s \emph{utility} $v_i(S) - \sum_{j \in S} p_j$ from the set of available items at these prices. %For brevity, we often write $p(S) = \sum_{j \in S} p_j$ to denote the price of a bundle $S$.
Demand queries are strictly more powerful than value queries, in the sense that value queries can be simulated by polynomially many demand queries, and in terms of communication cost, demand queries are exponentially stronger than value queries \cite{BN09}.

The approximability of social welfare maximization by polynomial-time algorithms and truthful mechanisms for CAs with submodular and subadditive bidders has been extensively studied by the communities of Approximation Algorithms and Algorithmic Mechanism Design in the last two decades and are practically well understood (see e.g., Section~\ref{s:previous} for a selective list of references most relevant to our work).

\subsection{Perturbation Stability in Combinatorial Auctions}

Motivated by recent work on beyond worst-case analysis of algorithms \cite{Rough19,Rough20} and on endowed valuations for combinatorial markets \cite{BDO18,EFF20}, in this work, we investigate whether strong performance guarantees for social welfare maximization (by polynomial-time algorithms and truthful mechanisms, or even at the equilibrium of simple auctions) can be achieved for a very restricted (though still natural) class of CAs with \emph{perturbation stable} valuations, where the optimal solution is resilient to a small increase of any bidder's valuation for any single item. %More specifically, we focus on CAs with submodular bidders and investigate if there is a natural class of mechanisms that behave reasonably well in general and reach optimality for CAs with perturbation stable submodular valuations.

From a bird's view, we follow the approach of \emph{beyond worst-case analysis} (see e.g., \cite{Rough20,Rough19}), where we seek a theoretical understanding of the superior practical performance of certain algorithms by formally analyzing them on practically relevant instances. Hence, researchers restrict their attention to instances that satisfy certain application-area-specific assumptions, which are likely to be satisfied in practice. Such assumptions may be of stochastic (e.g., smoothed analysis of Simplex and local search \cite{ST04,ST09,ERV16}) or deterministic nature (e.g., perturbation stability in clustering \cite{AMM17,ABS12,BHW16,BL10}).

The beyond worst-case approach is not anything new for (Algorithmic) Mechanism Design. \emph{Bayesian} analysis, where the bidder valuations are drawn as independent samples from an arbitrary distribution known to the mechanism, is standard in revenue maximization \cite{RoughOpt} and has led to many strong and elegant results for social welfare maximization by truthful posted price mechanisms (see e.g., \cite{FGL14,DFKL17}). However, in this work, we significantly deviate from Bayesian analysis, where the mechanism has a relatively accurate knowledge of the distribution of bidder valuations. Instead, we suggest a deterministic restriction on the class of instances (namely, perturbation stability) and investigate if there is a natural class of mechanisms (e.g., P2A) that are incentive-compatible and achieve optimality for CAs with stable submodular valuations.

Our focus on perturbation stable valuations was actually motivated by the recent work of Babaioff et al. \cite{BDO18} and Ezra et al. \cite{EFF20} on combinatorial markets where the valuations exhibit the endowment effect. The \emph{endowment effect} was proposed by the Nobel Laureate Richard Thaler \cite{thaler_2000} to explain situations where owning a bundle of items causes its value to increase. Babaioff et al. \cite{BDO18} defined that if an allocation $\S = (S_1, \ldots, S_n)$ is $\alpha$-endowed, for some $\alpha > 1$, in a CA with bidder valuations $(v_1, \ldots, v_n)$, then the valuation function of each bidder $i$ becomes
\begin{equation}\label{eq:endowed}
 v'_i(T) = v_i(T) + (\alpha - 1) v_i(S_i \cap T)\,,
\end{equation}
for all item sets $T \subseteq M$. Roughly speaking, the value of $S_i$ (and its subsets) is inflated by a factor of $\alpha$ due to the endowment effect. The main result of \cite{BDO18} is that for any combinatorial market with submodular valuations $(v_1, \ldots, v_n)$, any locally optimal allocation $\S$ and any $\alpha \geq 2$, the market with $\alpha$-endowed valuations $(v'_1, \ldots, v'_n)$ for $\S$ admits a Walrasian equilibrium (see Section~\ref{s:prelims} for the definition) where each bidder $i$ receives $S_i$. In simple words, social welfare maximization in combinatorial markets with endowed valuations $(v'_1, \ldots, v'_n)$ is polynomially solvable and the optimal allocation is supported by item prices. Subsequently, Ezra et al. \cite{EFF20} presented a general framework for endowed valuations and extended the above result to XOS valuations and general valuations, for a sufficiently large endowment (see also previous work on bundling equilibrium and conditional equilibrium \cite{DFTW18,FKL12}).

Inflated valuations due to the endowment effect naturally occur in auctions that take place regularly over time. Imagine auctions for e.g., season tickets of an athletic club, spots in a parking lot, reserving timeslots for airport gates, vacation packages at resorts, etc., where regular participants tend to value more the bundles allocated to them in the past, due to the endowment effect (see also \cite{thaler_2000} for more examples). Given the strong positive results of \cite{BDO18,EFF20}, a natural question is whether CAs with valuations inflated due to the endowment effect allow for stronger approximation guarantees in social welfare maximization and mechanism design.

\smallskip\noindent{\bf Stable Combinatorial Auctions.}
To investigate the question above, we adopt a slightly stronger condition on valuation profiles, namely perturbation stability, which is inspired by (and bears a resemblance to) the definition of perturbation stable clustering instances (see e.g., \cite{AMM17,ABS12,BHW16,BL10}).

\begin{definition}\label{def:stable}
For a constant $\gamma \geq 1$, a \textit{$\gamma$-perturbation} of a valuations profile $\vec{v} = (v_1, \ldots, v_n)$ on a bidder $i$ and an item $j$ is a new valuations profile $\vec{v}' = (v'_1, \ldots, v'_n)$, where $v'_k = v_k$ for all bidders $k \neq i$, and for all $S \subseteq M$,
\begin{equation}\label{eq:perturbation}
v'_i(S) = v_i(S) + (\gamma - 1) v_i(S \cap \{ j \})
\end{equation}
A CA with valuations $\vec{v} = (v_1, \ldots, v_n)$ is \emph{$\gamma$-perturbation stable} (or \emph{$\gamma$-stable}) if the optimal allocation for $\vec{v}$ is unique and remains unique for all $\gamma$-perturbations $\vec{v}'$ of $\vec{v}$.
\end{definition}

\begin{example}\label{ex:AliceBob}
    Let Alice and Bob compete for 2 items, $a$ and $b$, and have valuations $v_A(\{a\}) = v_A(\{a, b \}) = 2$ and $v_A(\{ b \}) = 1$, and $v_B(\{b\}) = v_B(\{a, b \}) = 2$ and $v_B(\{ a \}) = 1$. The (unique) optimal allocation is to give $a$ to Alice and $b$ to Bob, with social welfare $4$. A perturbation with most potential to change the optimal solution is to inflate Alice's value of $b$ by $\gamma \geq 1$. Then, we get $v'_A(\{a\}) = 2$, $v'_A(\{ b \}) = \gamma$ and $v'_A(\{a, b \}) = 1+\gamma$. The optimal solution remains unique for any $\gamma < 3$. Hence the above CA is $(3-\e)$-stable, for any $\e > 0$.\qed
\end{example}

At the conceptual level, we feel that the condition of $\gamma$-stability is easier to grasp and to think about in the context of mechanism design for CAs (compared against considering valuation profiles $\vec{v}$ resulting from the $\alpha$-endowment of an optimal solution to an initial valuations profile $\vec{x}$)%
\footnote{\label{foot:property}For a better understanding of the two conditions at a technical level, we note that a (technically very useful) necessary condition for a valuations profile $\vec{v}$ to be $\gamma$-stable is that for the optimal allocation $(O_1, \ldots, O_n)$, any bidders $i \neq k$ and any item $j \in O_i$,
\[  v_i(O_i) - v_i(O_i \setminus \{ j \}) > v_k(O_k \cup \{ j \}) - v_k(O_k) + (\gamma - 1) v_k(\{ j \}) \geq (\gamma - 1) v_k(\{ j \}) \,. \]
For this condition, we use (local) optimality of $(O_1, \ldots, O_n)$ for both $\vec{v}$ and its $\gamma$-perturbation on bidder $k$ and item $j$ (see also Lemma~\ref{LemmaGen}).

A similar (technically useful) condition satisfied by any valuations profile $\vec{v}$ that has resulted from the $\alpha$-endowment of an optimal  (or locally optimal) solution $(O_1, \ldots, O_n)$ to an initial valuations profile $\vec{x}$ is that for any bidders $i \neq k$ and any item $j \in O_i$,
\[  v_i(O_i) - v_i(O_i \setminus \{ j \}) \geq \alpha \big(v_k(O_k \cup \{ j \}) - v_k(O_k) \big)\,. \]
For this condition, we use local optimality of $(O_1, \ldots, O_n)$ for $\vec{x}$, multiply the resulting inequality by $\alpha$, and observe that $v_i(O_i) - v_i(O_i \setminus \{ j \}) = \alpha\big(x_i(O_i) - x_i(O_i \setminus \{ j \})\big)$ and that $v_k(O_k \cup \{ j \}) - v_k(O_k) = x_k(O_k \cup \{ j \}) - x_k(O_k)$.}.
From an algorithmic and mechanism design viewpoint, we remark that for any $\gamma \geq 2$, CAs with $\gamma$-stable submodular valuations can be treated (to a certain extent) as multi-item auctions with additive bidders. In fact, this is the technical intuition behind several of our positive results.

\subsection{Contributions}
\label{s:contrib}

We focus on deterministic algorithms and mechanisms. We first show that a simple greedy algorithm (Algorithm~\ref{AlgoSubA}) that allocates each item $j$ to the bidder $i$ with maximum $v_i(\{ j \})$ finds the optimal allocation for CAs with $2$-stable subadditive valuations (Theorem~\ref{thm:OptSubA}). Moreover, similarly to \cite{BDO18}, we show that for $2$-stable submodular valuations, combining the optimal allocation with a second price approach, where each item $j$ gets a price of $p_j = \max_{k\neq i} v_k(\{ j \})$, results in a Walrasian equilibrium (Theorem~\ref{thm:WalEq}).

On the negative side, we prove that our positive results above cannot be significantly strengthened. We first show that there is a simple $(2-\e)$-stable CA with submodular bidders where approximating the social welfare within any factor larger than $1-\frac{1}{2k}$ requires at least $\binom{m}{k}$ value queries, for any integer $k \geq 1$ (Theorem~\ref{thm:Under2Stable}). Thus, a polynomial-time approximation scheme does not exist for $(2-\e)$-submodular valuations. Moreover, we show that for any $\gamma \geq 1$, there is a $\gamma$-stable CA with a XOS bidder and a unit demand bidder that does not admit a Walrasian equilibrium (Lemma~\ref{lem:no_WE_XOS}).

On the mechanism design part, in a nutshell, we show that (possibly appropriately modified) Parallel 2nd Price Auctions (P2A) behave very well for stable CAs. We should highlight that despite the fact that maximizing the social welfare for $2$-stable subadditive CAs is easy, VCG is not an option for the design of computationally efficient incentive compatible mechanisms. The reason is that removing a single bidder from a $2$-stable CA may result in an  NP-hard (and hard to approximate) (sub)instance.

In Section~\ref{s:mechanisms}, we show that a P2A followed by a demand query for each bidder is dominant strategy incentive compatible (DSIC) for all CAs with subadditive bidders and maximizes the social welfare if the valuations profile is submodular and $2$-stable (Theorem~\ref{thm:OptSubM}). If demand queries are not available, the mechanism boils down to a simple P2A. We show that P2A is ex-post incentive compatible (EPIC) for $2$-stable submodular valuations and that truthful bidding leads to the optimal allocation.

On the negative side and rather surprisingly, we show that demand queries are indeed necessary for computationally efficient mechanisms that are DSIC for all submodular valuations and maximize the social welfare if the instance is $\gamma$-stable (even if $\gamma$ is arbitrarily large, Theorem~\ref{thm:DSIC_no_exist}). Our construction is an insightful adaptation of the elegant lower bound in \cite[Theorem~1]{Dob11} to the case of stable submodular valuations. We show that any DSIC mechanism that computes the optimal allocation for stable CAs and does not use demand queries must use exponentially many value queries. The crux of the proof is that in certain instances, the bidders may find profitable to misreport and switch from a non-stable instance to a stable one.

In Section~\ref{s:poa}, we analyze the Price of Anarchy (PoA) of P2A and Parallel 1st Price Auctions (P1A). Our results demonstrate that the quality of equilibria of simple non-truthful auctions improves only for $\gamma$-stable valuations, with $\gamma \geq 3$. We show that the PoA of P2A for CAs with $3$-stable submodular valuations is $1$ (Theorem~\ref{theoremP2AStable}), while there are $(3-\e)$-stable CAs with PoA equal to $1/2$ (Lemma~\ref{lem:PoA_counter_example}), which matches the PoA for CAs with general submodular valuations (see e.g., \cite{RST16}). Moreover, we show that the PoA of both P2A and P1A for CAs with $\gamma$-stable XOS valuations is at least $\frac{\g-2}{\g-1}$, for any $\gamma \geq 2$ (Theorem~\ref{theoremStableP2AXOS} and Theorem~\ref{theoremStableP1A}).

\subsection{Previous Work}
\label{s:previous}

Social welfare maximization with submodular and subadditive valuations has been studied extensively. Submodular Welfare Maximization (SMOD-WM) is known to be $(1-1/e)$-approximable with polynomially many value queries \cite{VondrakStoc08} and $(1-1/e+\e)$-approximable, for a fixed constant $\e > 0$, with polynomially many demand queries \cite{FVFocs06}. Moreover, a simple and natural greedy algorithm achieves an approximation ratio of $1/2$ using only value queries \cite{LLN06}. The results about polynomial-time approximability with value queries are best possible, in the sense that approximating SMOD-WM within a factor of $1-1/e+\e$, for any constant $\e > 0$, is NP-hard \cite{KLMM05} and requires exponentially many value queries \cite{MSV07}. Furthermore, there is a constant $\e > 0$, such that approximating SMOD-WM within a factor of $1-\e$ with demand queries is NP-hard \cite{FVFocs06}. Subadditive Welfare Maximization (SADD-WM) is $m^{-1/2}$-approximable with polynomially many value queries (and this is best possible \cite{MSV07}) and $1/2$-approximable with polynomially many demand queries \cite{Feige}.

Truthful maximization of social welfare in CAs with submodular (or XOS) bidders has been a central problem in Algorithmic Mechanism Design.
In the \emph{worst-case} setting, where we do not make any further assumptions on bidder valuations, Dobzinski et al. \cite{DNS06} presented the first truthful mechanism that uses polynomially many demand queries and achieves a non-trivial approximation guarantee of $O((\log m)^{-2})$. Dobzinski \cite{D07} improved the approximation ratio to $O(\frac{1}{\log m \log \log m})$ for the more general class of subadditive valuations. Subsequently, Krysta and V\"{o}cking \cite{KV12} provided an elegant randomized online mechanism with an approximation ratio of $O(\frac{1}{\log m})$ for XOS valuations. Dobzinski \cite{D16} broke the logarithmic barrier for XOS valuations, by showing an approximation guarantee of $O((\log m)^{-1/2})$, which was recently improved to $O((\log\log m)^{-3})$ by Assadi and Singla \cite{AS19}. Accessing valuations through demand queries is essential for these strong positive results. Dobzinski \cite{Dob11} proved that any truthful mechanism for CAs with submodular bidders with approximation ratio better than $m^{-\frac{1}{2}+\e}$ must use exponentially many value queries. Truthful $\Theta(m^{-1/2})$-approximate mechanisms that use polynomially many value queries are known even for the more general class of subadditive valuations (see e.g., \cite{DNS06}).

In the Bayesian setting, Feldman et al. \cite{FGL14} showed how to obtain item prices that provide a constant approximation ratio for XOS valuations. These results were significantly extended and strengthened by D{\"u}etting et al. \cite{DFKL17}.

Previous work has also shown strong Price of Anarchy (PoA) guarantees for CAs with submodular, XOS and subadditive bidders that can be achieved by simple (non-truthful) auctions, such as P2A and P1A (see e.g., \cite{CKS08,ST14,RST16,Rough14}).

Our notion of perturbation stability for CAs is inspired by conceptually similar notions of perturbation stability in clustering \cite{ABS12,BL10}. Angelidakis et al. \cite{AMM17} presented a polynomial-time algorithm for $2$-stable clustering instances with center-based objectives (e.g., $k$-median, $k$-means, $k$-center), while Balcan et al. \cite{BHW16} proved that there is no polynomial-time algorithm for $(2-\e)$-stable instances of $k$-center, unless NP = RP. To the best of our knowledge, this is the first time that the notion of perturbation stability has been applied to social welfare maximization and to algorithmic mechanism design for Combinatorial Auctions.

\section{Notation and Preliminaries}
\label{s:prelims}

The key notion of $\gamma$-perturbation stability (Definition~\ref{def:stable}) and a significant part of the terminology and the notation are introduced in Section~\ref{s:intro}. In this section, we introduce some additional terminology, notation and conventions used in the technical part. 

We always let $\mathcal{O} = (O_1, \ldots, O_n)$ denote the optimal allocation for the instance at hand, and let $O_i$ be the bundle of bidder $i$ in $\mathcal{O}$. For convenience, we usually let an index $j$ also denote the singleton set $\{ j \}$ (we write $v_i(j)$, $v_i(S \cup j)$, $v_i(S \setminus j)$, etc., instead of $v_i(\{j\})$, $v_i(S \cup \{j\})$, $v_i(S \setminus \{j\})$). We use both $S_1 \setminus S_2$ and $S_1 -S_2$ for the set difference. We denote the marginal contribution of a bundle $S$ wrt. $T$ as $v(S|T) = v(S\cup T) - v(T)$.

In addition to submodular, XOS, and subadditive valuations, we consider the classed of \emph{additive} and \emph{unit-demand} valuations $v: 2^M \to \reals_{\geq 0}$, where there exist $b_1,\ldots,b_m \in \reals_{\geq 0}$, such that for any $S \sub M$, $v(S)=\sum_{j\in S} b_j$ and $v(S)=\max_{j\in S} b_j$, respectively. A useful property of an XOS valuation $v$ is that for any $S \sub M$, there is an additive valuation $q$ that \emph{supports} $S$, in the sense that $v(S) = q(S)$ and for any $T \sub M$, $v(T) \geq q(T)$.

We focus on deterministic algorithms and mechanisms and consider bidders with \emph{quasi-linear utilities}, where the utility of bidder $i$ with valuation $v_i$ for a bundle $S$ at price $p(S)$ is $u_i(S) = v_i(S) - p(S)$. For a price vector $(p_1,\ldots,p_m)$, we often let $p(S) = \sum_{j \in S} p_j$ denote the price of a bundle $S \sub M$. 

An allocation $\S = (S_1, \ldots, S_n)$ and a price vector $(p_1,\ldots,p_m)$ form a \emph{Walrasian Equilibrium} if all items $j$ with $p_j > 0$ are  allocated and each bidder $i$ gets a utility maximizing bundle (or, his \emph{demand}) in $\S$, i.e., $\forall S \sub M$, $v_i(S_i) - p(S_i)\geq v_i(S) - p(S)$. 

A mechanism is \emph{dominant-strategy incentive compatible} (DSIC) (or \emph{truthful}) if for any valuations profile $\vec{v}$, answering (value or demand) queries truthfully is a dominant strategy and guarantees non-negative utility for all bidders. A mechanism is called \emph{ex-post incentive compatible} (EPIC) if truthful bidding is an ex-post Nash equilibrium and guarantees non-negative utility for all bidders.

Let $\vD = (D_1,..., D_n)$ be a profile of distributions over valuation functions (i.e., over possible bids). In a mechanism with allocation rule $\vS(\cd) = (S_1(\cd),\ldots,S_n(\cd))$ and item pricing rule $\vp = (p_1(\cd),\ldots,p_m(\cd))$, $\vD$ forms a \emph{Mixed Nash Equilibrium} (MNE), if no bidder has an incentive to unilaterally deviate from $\vD$, i.e., for any bidder $i$ and any distribution $D_i'$ over valuation functions, 
   \begin{equation*}
      \E_{\vb \sim \vD} \left[v_i(S_i(\vb)) - \sum_{j\in S_i(\vb)}p_j(\vb)\right] \geq 
      \E_{\vb \sim (D_i', \vD_{-i})} \left[v_i(S_i(\vb)) - \sum_{j\in S_i(\vb)}p_j(\vb)\right]
   \end{equation*}
If instead of distributions over valuation functions, we restrict each $D_i$ and $D_i'$ to valuation functions (i.e., to pure strategies over possible bids), we get the definition of a \emph{Pure Nash Equilibrium} (PNE). 

In Section~\ref{s:poa}, we consider the \emph{Price of Anarchy} (PoA) of Parallel 2nd Price Auctions (P2A) and Parallel 1st Price Auctions (P1A). In a Combinatorial Auction, the PoA of a mechanism is the ratio of (resp. expected) social welfare at the worst Pure (resp. Mixed) Nash Equilibrium to the social welfare of the optimal allocation. Formally, focusing on the more general case of Mixed Nash Equilibria: 
\begin{equation*}
      \text{PoA} = \min_{\vD \text{ is a MNE}} \frac{\E_{\vb\sim\vD}\left[\sum_i v_i(S_i(\vb))\right]}{\sum_i v_i(O_i)}
   \end{equation*}

\noindent{\bf Properties of Stable Valuations.}
The following shows a technically useful property of $\gamma$-stable CAs (see also Footnote~\ref{foot:property}).

\begin{lemma}[Valuation Stability] \label{LemmaGen}
    Let $\vv$ be $\g$-stable and subadditive valuations. Then, for all bidders $i\neq k$ and all items $j \in O_i$
    \begin{equation*}
        v_i(j) \geq v_i(O_i) - v_i(O_i \setminus j) > (\g-1) v_k(j)
    \end{equation*}
\end{lemma}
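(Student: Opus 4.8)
The plan is to prove the two inequalities separately. The left inequality, $v_i(j) \geq v_i(O_i) - v_i(O_i \setminus j)$, is immediate from subadditivity: since $v_i(O_i) \leq v_i(O_i \setminus j) + v_i(j)$, rearranging gives the claim. (Note this does not even use stability, just subadditivity and $j \in O_i$.)

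The substance is in the right inequality, $v_i(O_i) - v_i(O_i \setminus j) > (\gamma - 1) v_k(j)$. First I would consider the $\gamma$-perturbation $\vec{v}'$ of $\vec{v}$ on bidder $k$ and item $j$; by Definition~\ref{def:stable}, $\mathcal{O}$ is no longer optimal for $\vec{v}'$, or more precisely, since $\gamma$-stability says the optimal allocation remains unique under every $\gamma$-perturbation, I need to compare the social welfare of $\mathcal{O}$ under $\vec{v}'$ against the alternative allocation $\mathcal{O}'$ obtained from $\mathcal{O}$ by moving item $j$ from bidder $i$ to bidder $k$ (i.e., $O'_i = O_i \setminus j$, $O'_k = O_k \cup j$, and $O'_\ell = O_\ell$ otherwise). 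Since $\mathcal{O}$ is the \emph{unique} optimizer for $\vec{v}$, we have the strict inequality $\sw_{\vec{v}}(\mathcal{O}) > \sw_{\vec{v}}(\mathcal{O}')$, which unpacks to $v_i(O_i) + v_k(O_k) > v_i(O_i \setminus j) + v_k(O_k \cup j)$, hence $v_i(O_i) - v_i(O_i \setminus j) > v_k(O_k \cup j) - v_k(O_k) = v_k(j \mid O_k)$. This handles the ``$\gamma = 1$'' portion but I still need the extra $(\gamma-1)v_k(j)$ slack, and for that I should instead (or additionally) use optimality of $\mathcal{O}$ for the perturbed profile $\vec{v}'$: $\sw_{\vec{v}'}(\mathcal{O}) \geq \sw_{\vec{v}'}(\mathcal{O}')$. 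Computing both sides with $v'_k(S) = v_k(S) + (\gamma-1)v_k(S \cap j)$: the left side only changes $O_k$ (which doesn't contain $j$) so $v'_k(O_k) = v_k(O_k)$, while the right side has $v'_k(O_k \cup j) = v_k(O_k \cup j) + (\gamma - 1)v_k(j)$. This yields $v_i(O_i) + v_k(O_k) \geq v_i(O_i \setminus j) + v_k(O_k \cup j) + (\gamma-1)v_k(j)$, i.e., $v_i(O_i) - v_i(O_i \setminus j) \geq v_k(j \mid O_k) + (\gamma-1)v_k(j) \geq (\gamma-1)v_k(j)$, where the last step uses $v_k(j \mid O_k) \geq 0$ by monotonicity (free disposal).

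To get the \emph{strict} inequality claimed, I would combine the two: from the unperturbed uniqueness we get strict inequality for the $\gamma=1$ part, $v_i(O_i) - v_i(O_i\setminus j) > v_k(j\mid O_k)$; it is cleanest to argue directly that $\mathcal{O}$ must also be the unique optimizer of $\vec{v}'$ (again by $\gamma$-stability, since the optimal allocation ``remains unique for all $\gamma$-perturbations''), so $\sw_{\vec{v}'}(\mathcal{O}) > \sw_{\vec{v}'}(\mathcal{O}')$ strictly, giving $v_i(O_i) - v_i(O_i \setminus j) > v_k(j \mid O_k) + (\gamma - 1)v_k(j) \geq (\gamma-1)v_k(j)$ directly. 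Here I am using that $\mathcal{O}$, being optimal for $\vec{v}$, is also optimal for $\vec{v}'$ — this needs a one-line justification: any allocation's welfare changes by exactly $(\gamma-1)v_k(S_k\cap j)$ under the perturbation, and since moving $j$ away from whoever holds it in any given allocation only weakly decreases welfare, one can check $\mathcal{O}$ remains a maximizer; alternatively and more safely, I simply invoke Definition~\ref{def:stable} which directly asserts the optimal allocation is unchanged and unique for every $\gamma$-perturbation.

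The main obstacle — really the only subtle point — is being careful about \emph{why} $\mathcal{O}$ remains the (unique) optimal allocation for the perturbed profile $\vec{v}'$, since Definition~\ref{def:stable} phrases stability as ``the optimal allocation for $\vec{v}$ is unique and remains unique for all $\gamma$-perturbations.'' I read this as asserting that for each $\gamma$-perturbation $\vec{v}'$, the optimal allocation (which, one should confirm, is still $\mathcal{O}$) is unique; a careful writeup should spell out that the perturbation on $(k,j)$ can only help allocations that give $j$ to $k$, and $\mathcal{O}$ gives $j$ to $i \neq k$, so if some other allocation beat $\mathcal{O}$ under $\vec{v}'$ it would have to give $j$ to $k$ and already be at least as good as $\mathcal{O}$ under $\vec{v}$, contradicting uniqueness of $\mathcal{O}$ for $\vec{v}$ unless it equals $\mathcal{O}$ — contradiction. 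Everything else is straightforward rearrangement and an appeal to free disposal. This mirrors exactly the computation sketched in Footnote~\ref{foot:property}.
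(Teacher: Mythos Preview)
Your proposal is correct and follows essentially the same route as the paper: perturb bidder $k$ on item $j$, compare $\mathcal{O}$ to the allocation that moves $j$ from $i$ to $k$, and use unique optimality of $\mathcal{O}$ for the perturbed profile together with monotonicity of $v_k$ to conclude. You are in fact more careful than the paper in two places---you explicitly derive the left inequality from subadditivity, and you justify why $\mathcal{O}$ remains the (unique) optimizer of $\vec{v}'$---whereas the paper simply asserts these.
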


\begin{proof}
    Fix a bidder $i$, an item $j\in O_i$ and another bidder $k \neq i$. Since the valuations are $\g$-stable, the optimal allocation remains unique and optimal, even if we inflate $k$'s value for item $j$ by $\g$. Such a $\gamma$-perturbation of $v_k$ results in $v'_k(O_k) = v_k(O_k)$ and $v'_k(O_k \cup j) = v_k(O_k \cup j) + (\g-1) v_k(j)$. Using the optimality of the allocation $(O_1, \ldots, O_n)$ after the perturbation, we obtain that: 
    \begin{equation*}
    v_i(O_i) - v_i(O_i \setminus j) > v_k'(O_k\cup j) - v_k(O_k) = v_k(O_k \cup j)  - v_k(O_k) + (\g-1) v_k(j)
    \end{equation*}
    Using that $v_k$ is non-decreasing concludes the proof of the lemma.
\end{proof}

\section{Social Welfare Maximization for Stable Valuations} 
\label{sec:Optimization}

We next consider the problem of social welfare maximization for $2$-stable CAs. We first show that for $2$-stable subadditive valuations, we can compute the optimal solution with value queries in polynomial time. 

\begin{algorithm}[tb]
\caption{Algorithm for $2$-Stable Subadditive Valuations}
\label{AlgoSubA}
\SetKwInOut{Input}{Input}
% \SetKwInOut{Output}{Output}
\Input{Value query access to subadditive valuations $v_1(\cd),...,v_n(\cd)$}
% \Output{Optimal Allocation $O_1,...,O_n$}
\smallskip

    Set $O_1 = O_2 = ... = O_n = \emptyset$ \\
    \For{$j \in M$}{
        Let $i$ be the bidder that maximizes $v_i(j)$, and set $O_i \leftarrow O_i\cup\{j\}$. \\
        % Allocate item $j$ to bidder $i$, i.e. set  \\
    }
    \Return{Allocation $(O_1,...O_n)$} \\
\end{algorithm}

\begin{theorem}\label{thm:OptSubA}
Let $\vv$ be a $2$-stable subadditive valuations profile. Then Algorithm~\ref{AlgoSubA} outputs the optimal allocation $(O_1,...,O_n)$ using $nm$ value queries.
\end{theorem}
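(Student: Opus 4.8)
The plan is to show that the greedy allocation produced by Algorithm~\ref{AlgoSubA}, call it $\mathcal{A} = (A_1, \ldots, A_n)$, coincides with the optimal allocation $\mathcal{O} = (O_1, \ldots, O_n)$, which is unique by $2$-stability. The natural strategy is to argue by contradiction: suppose $\mathcal{A} \neq \mathcal{O}$, and derive that $\sw(\mathcal{A}) \geq \sw(\mathcal{O})$ (or even strictly greater), contradicting the uniqueness and optimality of $\mathcal{O}$. The key leverage is Lemma~\ref{LemmaGen}, which tells us that for every item $j \in O_i$ and every other bidder $k$, we have $v_i(j) > (\gamma-1) v_k(j) = v_k(j)$ when $\gamma = 2$; that is, in the optimal allocation, each item is held by the bidder who values it (as a singleton) strictly the most. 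But that is exactly the rule Algorithm~\ref{AlgoSubA} uses to allocate items.

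First I would apply Lemma~\ref{LemmaGen} with $\gamma = 2$ to conclude: for every bidder $i$, every $j \in O_i$, and every $k \neq i$, $v_i(j) > v_k(j)$. Hence each item $j$ has a unique maximizer of $v_i(j)$ among bidders (the ties that Algorithm~\ref{AlgoSubA}'s argmax might encounter never actually occur on items that matter), and that maximizer is precisely the bidder $i$ with $j \in O_i$. Therefore the greedy rule assigns item $j$ to exactly the bidder who receives it in $\mathcal{O}$, so $\mathcal{A} = \mathcal{O}$ item-by-item. The query count is immediate: for each of the $m$ items the algorithm computes $v_i(j)$ for all $n$ bidders, using $nm$ value queries in total, and no other queries are made.

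The main subtlety — and the step I would be most careful about — is handling the singleton maximizer correctly. Lemma~\ref{LemmaGen} gives a strict inequality $v_i(j) > v_k(j)$, but it is stated only for $j \in O_i$; I should make sure the logic covers the possibility that two bidders tie on some item. The point is that if $j \in O_i$, then by the lemma $v_i(j) > v_k(j)$ strictly for all $k \neq i$, so there is no tie and bidder $i$ is the unique argmax; thus line~3 of the algorithm assigns $j$ to $i$ regardless of any tie-breaking convention. Since this holds for every item $j$ (each item lies in exactly one $O_i$ because $\mathcal{O}$ is a partition of $M$), the greedy allocation reproduces $\mathcal{O}$ exactly. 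An alternative, slightly more robust phrasing avoids contradiction entirely and just shows directly that the allocation built by the algorithm equals $\mathcal{O}$; I would likely use that cleaner direct argument rather than the ``$\sw(\mathcal{A}) \geq \sw(\mathcal{O})$'' route, since subadditivity makes comparing social welfare of two allocations awkward, whereas the item-by-item identification is clean and self-contained.
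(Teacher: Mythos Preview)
Your proposal is correct and follows essentially the same approach as the paper: fix an item $j$, let $i$ be the bidder with $j\in O_i$, apply Lemma~\ref{LemmaGen} with $\gamma=2$ to get $v_i(j)>v_k(j)$ for all $k\neq i$, and conclude that the greedy rule assigns $j$ to $i$. The paper's proof is exactly this item-by-item argument (not the social-welfare comparison you wisely discarded), together with the immediate $nm$ query count.
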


\begin{proof}
The number of value queries follows directly from the description of the algorithm. As for optimality, we fix an item $j$ and let $i$ be the bidder that gets $j$ in the optimal solution. Because of Lemma~\ref{LemmaGen} and the fact that $\vv$ is $2$-stable, we know that $v_i(j)>v_k(j)$, for any other bidder $k\neq i$. Because Algorithm~\ref{AlgoSubA} allocates $j$ to the bidder with the highest singleton value, $i$ gets item $j$ in the allocation of Algorithm~\ref{AlgoSubA}.
\end{proof}

On the negative side, we next show that a polynomial-time approximation scheme does not exist even for $(2-\e)$-stable submodular valuations.

\begin{theorem}\label{thm:Under2Stable}
    For any $\e > 0$, there exists a submodular $(2-\e)$-stable valuations profile $\vv$ such that for any integer $k \geq 1$, approximating the optimal allocation in $\vv$ within any factor larger than $1 - \frac{1}{2k}$ requires at least $\binom{m}{k}$ value queries.
\end{theorem}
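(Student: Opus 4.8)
The plan is to construct a concrete family of submodular valuations on $m$ items where the optimal allocation is ``barely'' unique — stable with factor approaching $2$ but not reaching it — and where finding it (or even approximating it well) requires distinguishing a needle-in-a-haystack among $\binom{m}{k}$ equally plausible bundles. The natural template is a single bidder (or a pair of bidders) whose valuation is a small additive perturbation of a symmetric coverage/matroid-rank function: let $v(S) = \min(|S|, k) + \delta \cdot \mathbb{1}[S \supseteq H]$ for a hidden set $H$ with $|H| = k$, and $\delta$ chosen so the perturbation is invisible to any value query not made exactly on a superset of $H$. A second bidder (or the ``outside option'' of leaving items unallocated modeled via a dummy bidder with a matching symmetric valuation) forces the optimal allocation to hand bidder~1 precisely the set $H$; any algorithm that has not queried a superset of $H$ cannot tell $H$ from any other $k$-subset, so it cannot output an allocation giving bidder~1 value better than $k$ rather than $k + \delta$, and more importantly any allocation it commits to differs from the optimum on at least one item, losing a $\Theta(1/k)$ fraction of the social welfare. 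Feeding the adversary-argument: an adaptive algorithm making fewer than $\binom{m}{k}$ value queries leaves some $k$-subset unqueried-as-a-superset, and the adversary plants $H$ there.

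First I would nail down the valuation class: take two bidders, where $v_1$ and $v_2$ are both of the symmetric ``capped rank'' form on $M$ with cap $k$, plus bidder~1 secretly gets the bonus $\delta$ on supersets of $H$ (and symmetrically one could give bidder~2 the complementary structure so the optimum splits $M$ into $H$ and $M \setminus H$, or keep it simpler with one perturbed bidder and have the rest of the items go to bidder~2 at unit-marginal value). I need to check three things: (i) each $v_i$ is submodular — the cap function $\min(|S|,k)$ is a (truncated) rank function hence submodular, and adding $\delta \mathbb{1}[S\supseteq H]$ preserves submodularity provided $\delta$ is small, which I'd verify by checking the marginal $v(j \mid S)$ is non-increasing (the bonus only ever helps when $S$ is one item short of containing $H$, and there it adds the same $\delta$ regardless of $|S|$ — so submodularity survives as long as $\delta \le$ the gap in marginals of the cap function, i.e.\ $\delta \le 1$); (ii) the instance is $(2-\e)$-stable: the optimal allocation is to give bidder~1 the set $H$ and bidder~2 the rest (value $k + \delta + \min(m-k,k)$), and I must show that inflating any single bidder's value for any single item by a factor up to $2-\e$ does not create a tie or a new optimum — this is where the parameters $\delta$ and $\e$ get coupled, and it is essentially a finite case check over which item is perturbed (an item inside $H$, an item outside, and which bidder); (iii) the query lower bound: the value $v_1(S)$ equals $\min(|S|,k)$ for every $S$ that is not a superset of $H$, so the transcript of any algorithm making fewer than $\binom{m}{k}$ queries is consistent with at least two choices of $H$, on which the unique optimal allocations differ on some item, so the algorithm's output is wrong on one of them — and since the two optima differ by moving a single item away from bidder~1's optimal $k$-set, the social welfare ratio achieved is at most $(k - 1 + \delta + \ldots)/(k + \delta + \ldots) < 1 - \frac{1}{2k}$ for appropriately tuned $\delta$.

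The main obstacle I anticipate is the simultaneous tuning of $\delta$ (the bonus), $\e$ (the stability slack), and the approximation threshold $1 - \frac{1}{2k}$: $\delta$ must be large enough that missing $H$ costs a $\frac{1}{2k}$-ish fraction of welfare, small enough to keep $v_1$ submodular, and the stability margin $2 - \e$ forces $\delta$ to be bounded away from the value that would let a perturbation on an item of $H$ equalize two allocations. I expect the clean resolution is to take $\delta$ a small constant (like $\delta$ slightly less than $1$, or $\delta$ a function of $\e$) and verify stability by the criterion in Lemma~\ref{LemmaGen} and Footnote~\ref{foot:property}: for $j \in O_1 = H$ one needs $v_1(O_1) - v_1(O_1 \setminus j) > (\g-1) v_2(j)$ with $\g = 2-\e$, and since $v_1(O_1) - v_1(O_1 \setminus j) = 1 + \delta$ (removing an element of $H$ from $H$ drops below the cap \emph{and} loses the bonus) while $v_2(j) = 1$, this reads $1 + \delta > (1-\e)\cdot 1$, which holds trivially — so stability is actually the easy direction and the real content is making the welfare gap between the true optimum and any ``off-by-one-item'' allocation be at least a factor $\frac{1}{2k}$, which pins down how large $m - k$ (bidder~2's capacity) and $\delta$ should be. I would present the construction with explicit constants, state submodularity and $(2-\e)$-stability as two short lemmas, and then run the standard adversary argument for the $\binom{m}{k}$ bound.
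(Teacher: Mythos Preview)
Your construction has two genuine gaps that are not just matters of parameter tuning.

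\textbf{Submodularity fails.} The function $v_1(S) = \min(|S|,k) + \delta\,\mathbb{1}[S \supseteq H]$ is not submodular for any $\delta > 0$: the indicator $\mathbb{1}[S \supseteq H]$ is \emph{supermodular} (it is a conjunction of $k$ items), and adding it to a submodular function does not preserve submodularity here. Concretely, for $j \in H$ one has $v_1(j \mid \emptyset) = 1$ but $v_1(j \mid H \setminus j) = 1 + \delta$, so the marginal of $j$ strictly \emph{increases} along the chain $\emptyset \subset H \setminus j$. Your own informal check (``the bonus only ever helps when $S$ is one item short of containing $H$'') is precisely the statement that the bonus has increasing marginals.

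\textbf{The query lower bound fails.} Because $v_1(S)$ deviates from $\min(|S|,k)$ on \emph{every} superset of $H$ (not only on $H$ itself), a single value query on a set $S$ of size larger than $k$ reveals whether $H \subseteq S$. That is a group-testing oracle, and $H$ can be recovered in $O(k \log m)$ such queries by binary search on each coordinate. So $\binom{m}{k}$ queries are not required against your instance.

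The paper's construction avoids both issues at once. With $n = m/k$ bidders and hidden $k$-sets $O_i$, it sets $v^{O_i}(S) = |S|$ for $|S| < k$, $v^{O_i}(O_i)=k$, $v^{O_i}(S) = k-\tfrac12$ for all other $k$-sets, and crucially $v^{O_i}(S) = k$ for \emph{all} $|S| > k$ regardless of whether $S \supseteq O_i$. Penalizing the wrong $k$-sets (rather than rewarding the right one with an additive bonus) keeps all marginals non-increasing, and flattening the value above size $k$ makes large-set queries uninformative, so only size-$k$ queries help and $\binom{m}{k}$ of them are needed. Stability is then argued globally --- the second-best welfare is $m-1$ and every singleton value equals $1$, so any $\gamma$-perturbation lifts a suboptimal allocation to at most $m-2+\gamma < m$ iff $\gamma < 2$ --- rather than via the necessary (but not sufficient) condition of Lemma~\ref{LemmaGen} that you invoke.
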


\begin{proof}
    Given a set $O\sub M$ we define the following class of valuations:
    \begin{equation*}
        v^O(S) =
            \begin{cases}
                |S|, &\textit{if } |S| \leq |O|-1 \\
                |O|-1/2, &\textit{if } |S| = |O| \textit{ and } S\neq O\\
                |O|, &\textit{otherwise ($S=O$ or $S \geq |O| + 1$)}\\
            \end{cases}
    \end{equation*}

    We first prove that the above valuation is submodular for any $O$. To do that we examine the value of $v^O(S\cup j) - v^O(S)$, given that $j\notin S$:
    \begin{equation*}
        v^O(S\cup j) - v^O(S) =
            \begin{cases}
                1, &\textit{if } |S| \leq |O|-2 \\
                1/2 \textit{ or } 1, &\textit{if } |S| = |O|-1\\
                0 \textit{ or } 1/2, &\textit{if } |S| = |O|\\
                0, &\textit{if } |S| \geq |O| + 1\\
            \end{cases}
    \end{equation*}
    This entails that for any sets $S,T$, where $|S|\leq |T|$, it holds that $v^O(S|j)\geq v^O(T|j)$. This proves the submodularity, for any $O\sub M$.

    Now we note that given a valuation $v^O(\cd)$ and value query access to it, finding the bundle $O$ requires at least $\binom{m}{|O|} - 1$ value queries: Let $O'$ be such that $|O|=|O'|$. Observe that the valuations $v_O$ and $v_O'$ differ only in their value for $O$ and $O'$. Thus, a query for the value of a bundle $S$ only tells us whether the valuation is $v^S$ or not. In the worst case, we have to query the value of every bundle $S$ for which $|S|=|O|$ (except the ``last'' bundle) to determine $O$.

    Now we fix the number of items $m$ and the number of bidders $n$ such that $m/n=k$ and $k$ is an integer. We also fix a partition of the items $(O_1,..., O_n)$, where $|O_1| = ... = |O_n| = k$ and the valuation of each bidder $i$ to $v^{O_i}(\cd)$.

    First we note that allocation $(O_1,...,O_n)$ is the optimal allocation, with welfare $m$, because a higher welfare is impossible, as each bidder values $M$ for $m/n$.

    Now we will prove that the valuations are $(2-\e)$-stable. Notice that the allocation with the second highest welfare has welfare $m-1$: A welfare of $m - 1/2$ is not achievable as one bidder would need to contribute $k-1/2$ and all the others exactly $k$. To do that we would have to give each bidder exactly $k$ items and exactly one bidder would need to not take his optimal set. This entails that a second bidder would also not get his optimal, thus making both bidders contribute $k - 1/2$ to the total welfare.

    Since the second best allocation has welfare $m-1$ and every singleton value is 1, the maximum value that the second highest allocation can achieve even with inflation for 1 item by $\g$ is $m - 1 + (\g-1)\cd 1$. Thus the valuations are $\g$-stable if $m > m - 1 + (\g-1)\cd 1$. This entails that $\g$ can take any value less 2.

    Because of the structure of the valuations, finding the optimal allocation is hard. More specifically, in the worst case allocating any bidder $i$ his correct bundle $O_i$ requires $\binom{m}{k}$ value queries. This means that with less queries no bidder is going to get his optimal set. If no bidder gets his optimal set the highest achievable welfare is guaranteed when all the bidders get $k$ items. This leads to an approximation ratio
    \begin{equation*}
        \frac{n\cd(k-1/2)}{m} = 1 - \frac{1}{2k}
    \end{equation*}

    Thus we have proven that the inapproximability. Together with the submodularity and the stability of the valuations, this concludes the proof. 
\end{proof}

% \begin{proof}[Proof Sketch]
% Inspired by \cite[Lemma~3.10]{Dob11}, we consider the following class of submodular valuations:
% %
% \begin{equation*}
%     v^O(S) = 
%         \begin{cases}
%             |S|, &\textit{if } |S| \leq |O|-1 \\
%             |O|-1/2, &\textit{if } |S| = |O| \textit{ and } S\neq O\\
%             |O|, &\textit{otherwise ($S=O$ or $S \geq |O| + 1$)}\\
%         \end{cases}
% \end{equation*}

% For any allocation $(O_1, O_2)$ with $|O_1|=|O_2|$, we can show that the valuations $v^{O_1}$ and $v^{O_2}$ are submodular and $(2-\e)$-stable. Moreover, finding the optimal allocation requires $\binom{m}{|O_1|}$ queries. By generalizing this argument to $n$ bidders, we get the inapproximability ratio. The technical details are deferred to Appendix~\ref{s:thm:Under2Stable}.
% \end{proof}

\section{Existence of Walrasian Equilibrium}
\label{s:walras}

Similarly to \cite[Theorem~4.2]{BDO18}, we next show that combinatorial markets with $2$-stable submodular valuations admit a Walrasian Equilibrium. 

\begin{theorem}\label{thm:WalEq}
    Let $\vv$ be $2$-stable submodular valuations. For every bidder $i$ every item $j\in O_i$, let $p_j$ be the price of that item for which: $\max_{k\neq i}v_k(j) \leq p_j \leq v_i(j|O_i-j)$. Then, the prices $p_1,\ldots,p_m$ form a Walrasian Equilibrium.
\end{theorem}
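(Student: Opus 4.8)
The plan is to verify the two defining conditions of a Walrasian equilibrium for the pair $(\mathcal{O},(p_1,\ldots,p_m))$. The first condition — every item with positive price is allocated — is immediate, since $\mathcal{O}=(O_1,\ldots,O_n)$ is by definition a partition of $M$, so every item (priced or not) is allocated. Hence the whole content is the demand condition: for each bidder $i$, the bundle $O_i$ maximizes $u_i(S)=v_i(S)-p(S)$ over all $S\subseteq M$. Before that I would record that the stated price interval is nonempty and that $p_j\geq 0$: applying Lemma~\ref{LemmaGen} with $\gamma=2$, for $j\in O_i$ and every $k\neq i$ we get $v_i(j\mid O_i-j)=v_i(O_i)-v_i(O_i-j)>v_k(j)\geq 0$, hence $v_i(j\mid O_i-j)>\max_{k\neq i}v_k(j)\geq 0$.

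The key observation is a two-sided sandwiching of each price by bidder $i$'s marginals, powered by submodularity (decreasing marginals). For an item $j\notin O_i$, say $j\in O_\ell$ with $\ell\neq i$, the lower bound gives $p_j\geq\max_{k\neq\ell}v_k(j)\geq v_i(j)$, and submodularity gives $v_i(j)=v_i(j\mid\emptyset)\geq v_i(j\mid T)$ for every $T\subseteq M$; thus $p_j\geq v_i(j\mid T)$ for all $T$. For an item $j\in O_i$, the upper bound gives $p_j\leq v_i(j\mid O_i-j)$, and submodularity gives $v_i(j\mid T)\geq v_i(j\mid O_i-j)$ whenever $T\subseteq O_i-j$; thus $p_j\leq v_i(j\mid T)$ for all such $T$.

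With these two facts I would fix an arbitrary $S\subseteq M$ and transform it into $O_i$ in two phases, showing $u_i$ never decreases. Phase 1: remove the items of $S\setminus O_i$ one at a time; removing $j$ from a current set $T$ changes the utility by $p_j-v_i(j\mid T-j)\geq 0$ by the first fact, so after Phase 1 we are at $S\cap O_i$ with $u_i(S\cap O_i)\geq u_i(S)$. Phase 2: add the items of $O_i\setminus S$ one at a time; all intermediate sets $T$ satisfy $S\cap O_i\subseteq T\subseteq O_i$, so when we add $j\in O_i\setminus T$ we have $T\subseteq O_i-j$, and the utility changes by $v_i(j\mid T)-p_j\geq 0$ by the second fact. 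After Phase 2 we reach $O_i$ with $u_i(O_i)\geq u_i(S\cap O_i)\geq u_i(S)$. Since $S$ was arbitrary, $O_i$ is a demand of bidder $i$, and doing this for every $i$ finishes the proof.

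I do not expect a genuine obstacle; the only things to be careful about are applying submodularity in the correct direction in each phase (larger context $\Rightarrow$ smaller marginal) and checking that the intermediate sets in Phase 2 stay inside $O_i$ so that the bound $p_j\leq v_i(j\mid O_i-j)$ can be invoked. The conceptually important point is that the argument only needs the \emph{existence} of a valid price interval, which is exactly what $2$-stability buys us through Lemma~\ref{LemmaGen}: without $\gamma\geq 2$ the lower and upper bounds on $p_j$ could cross, so no supporting item prices would exist.
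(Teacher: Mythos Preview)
Your proposal is correct and follows essentially the same approach as the paper's proof: both use the price bounds together with submodularity to show that dropping any item outside $O_i$ does not decrease bidder $i$'s utility (your Phase~1, the paper's ``$i$'s demand is a subset of $O_i$'') and that adding any missing item of $O_i$ does not decrease it either (your Phase~2, the paper's ``$i$ gets non-negative utility due to $j$''). Your write-up is a bit more explicit in staging the transformation from an arbitrary $S$ to $O_i$, but the underlying argument and the use of Lemma~\ref{LemmaGen} for the well-definedness of $p_j$ are identical.
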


\begin{proof}
   Fix a bidder $i$ and his optimal bundle $O_i$. We note that the price $p_j$ of each item is well defined. Because of Lemma~\ref{LemmaGen}, 2-stability, and $j\in O_i$, $v_i(j|O_i-j) > v_k(j)$.
   
    We next show that for any $j\not\in O_i$, bidder $i$ is not interested in getting $j$. Because of submodularity, $i$'s additional utility due to item $j$ is at most $v_i(j) - p_j$. Because $j\not\in O_i$, $p_j \geq v_i(j)$, making $i$'s utility from $j$ non-positive. Hence, $i$'s demand is a subset of $O_i$.
    
    To conclude the proof, we show that for any item $j\in O_i$, $i$ gets non-negative utility due to $j$. Fix a bundle $S$ in the demand of bidder $i$ with $j \not\in S$. Note that we have already proven that $S\sub O_i$. The utility gained by taking $j$ is $v_i(j|S) - p_j$, which is non-negative; submodularity makes $v_i(j|S) \geq v_i(j|O_i-j) \geq p_j$. Hence, $O_i$ is the demand of bidder $i$.
\end{proof}

%Thus, we have shown that stability breaks the barrier of non-existence of WE for submodular valuations. Using prices similar to the above we are going to develop truthful and optimal mechanisms for stable valuations, which for non-stable ones do not exist. 

We next show that Theorem~\ref{thm:WalEq} cannot be extended to stable XOS valuations. In the proof of Theorem~\ref{thm:WalEq}, $2$-stability and submodularity ensure that the prices cannot exceed the marginal increase $v_i(S|j)$ of adding an item $j$ to some $S \subset O_i$. For XOS valuations, however, the marginals $v_i(S|j)$ may not be increasing with $S$. As a result, the utility of a bidder $i$ may be maximized by a strict subset of his optimal bundle $O_i$. This is the core idea for the proof of the following lemma.

\begin{lemma}\label{lem:no_WE_XOS}
    For every $\g\geq 1$, there exists a $\gamma$-stable valuations profile with a XOS bidder and a unit demand bidder which does not admit a Walrasian Equilibrium.
\end{lemma}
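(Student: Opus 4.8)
The plan is to build, for a given $\gamma\geq 1$, a two‑bidder instance on an item set $M=\{1,\dots,m\}$ with $m$ an integer chosen so that $m-1>\gamma$. Bidder $1$ (the XOS bidder) has $v_1(\emptyset)=0$, $v_1(M)=m$, and $v_1(S)=m-1$ for every other $S\subseteq M$; equivalently $v_1(S)=\max\{\,|S|,\,(m-1)\mathbf{1}[S\neq\emptyset]\,\}$, which displays $v_1$ as the pointwise maximum of the additive clauses $w_0(S)=|S|$ and $w_j(S)=(m-1)\mathbf{1}[j\in S]$ for $j=1,\dots,m$, hence XOS (and not submodular once $m\geq 3$, consistent with Theorem~\ref{thm:WalEq}). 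Bidder $2$ is a unit‑demand bidder valuing every single item at the same number $b$, where $\tfrac{1}{m-1}<b<\tfrac1\gamma$; this interval is nonempty precisely because $m-1>\gamma$, and in particular $b<1$.

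The first step is to pin down the optimum: in an allocation where bidder $1$ gets $S_1$, the welfare is $v_1(S_1)+b\cdot\mathbf{1}[S_1\neq M]$, which equals $m$ when $S_1=M$, equals $(m-1)+b<m$ when $\emptyset\neq S_1\subsetneq M$, and equals $b$ when $S_1=\emptyset$; so $\mathcal{O}=(M,\emptyset)$ is the unique optimal allocation with $\sw(\mathcal O)=m$. Next I would check $\gamma$-stability. Since $O_1=M$, a $\gamma$-perturbation of bidder $1$ on an item $j$ replaces $v_1(S)$ by $v_1(S)+(\gamma-1)(m-1)\mathbf{1}[j\in S]$, and a short case split (on whether $S_1=M$, $j\in S_1\subsetneq M$, or $j\notin S_1$) shows $(M,\emptyset)$ still strictly beats every competitor, with smallest margin $1-b>0$. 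A $\gamma$-perturbation of bidder $2$ on an item $j$ merely raises its value to $\gamma b$, so the best competing allocation then has welfare at most $(m-1)+\gamma b$, which is strictly below $m$ exactly because $\gamma b<1$. Hence $\mathcal O$ stays the unique optimum under every $\gamma$-perturbation, i.e.\ the instance is $\gamma$-stable.

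The crux is non‑existence of a Walrasian equilibrium. Suppose a price vector $p=(p_1,\dots,p_m)$ and an allocation form a WE; by the first welfare theorem the allocation is welfare‑maximizing, hence equals $\mathcal O=(M,\emptyset)$, so bidder $1$ holds $M$ and bidder $2$ holds $\emptyset$. Optimality of $\emptyset$ for bidder $2$, tested against each singleton $\{j\}$, forces $p_j\geq b$ for all $j$. Optimality of $M$ for bidder $1$, tested against the singleton bundle $\{r\}$ for any fixed $r\in M$, forces $v_1(M)-p(M)\geq v_1(\{r\})-p_r$, i.e.\ $\sum_{j\neq r}p_j\leq v_1(M)-v_1(\{r\})=m-(m-1)=1$. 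But $\sum_{j\neq r}p_j\geq (m-1)b>1$, a contradiction. (Every item has price $\geq b>0$ and is allocated to bidder $1$, so the requirement that positively priced items be sold is automatically met and plays no role.)

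The step I expect to be the real obstacle is reconciling the two requirements, which pull against each other: in $\mathcal O$ all contested items are held by the XOS bidder, so $\gamma$-stability caps bidder $2$'s per‑item value $b$ below $1/\gamma$, whereas destroying the WE needs bidder $2$ to exert enough price pressure. The construction escapes this by letting the number of items grow with $\gamma$ and by choosing $v_1$ whose marginal for completing the grand bundle out of any $(m-1)$-subset is only $1$: bidder $1$ then refuses to pay more than $1$ in total for the remaining $m-1$ items, while competition from bidder $2$ imposes a total of at least $(m-1)b>1$ on them. Designing a valuation with this "flat‑then‑jump" marginal behaviour that is still XOS — it cannot be submodular, by Theorem~\ref{thm:WalEq} — is the one genuinely delicate design choice; the valuation $\max\{|S|,m-1\}$ works because its high floor $m-1$ keeps it fractionally subadditive.
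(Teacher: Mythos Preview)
Your proof is correct and follows essentially the same construction as the paper: an XOS bidder whose valuation is constant on all nonempty proper subsets with a small jump at the grand bundle, together with a low-value unit-demand bidder, so that the price floor forced by bidder~2 on $m-1$ items exceeds what bidder~1 is willing to pay beyond a single item. The only differences are cosmetic parameter choices (you take the jump equal to $1$ and scale the unit-demand value $b<1/\gamma$, whereas the paper fixes $b=1$ and makes the jump $\gamma+1$), and your stability verification is in fact more complete, since you also treat the $\gamma$-perturbation of bidder~1, which the paper leaves implicit.
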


\begin{proof}
    Fix a value of $\g$ and suppose there are $m$ items, where $m > \g + 2$. The XOS bidder's valuation is simply a function of the carnality of his bundle:
    \begin{equation*}
        v_1(S) =
            \begin{cases}
                0, &\textit{if }\; |S| = 0 \\
                X, &\textit{if }\; 1 \leq |S| \leq m-1 \\
                X + \g + 1, &\textit{if }\; |S| = m \\
            \end{cases}
    \end{equation*}
    where $X$ is a very large positive value. It is easy to verify that the above valuation is indeed monotone. To prove that it is also XOS we arbitrarily order the items $M=\{1,2,...,m\}$ and use the following $m+1$ additive functions
    \begin{itemize}
        \item For $j\in[m]$, $a_j(\cd)$, values item $j$ for $X$, and the rest of the items for 0, i.e. $\forall S$: $a_j(S) = X\cd\mathbb{1}(j\in S)$.
        \item The $(m+1)$-th function, $a_{m+1}(\cd)$, values every item for $\frac{X+\g+1}{m}$, i.e. $\forall S$: $a_{m+1}(S) = \frac{X+\g+1}{m}|S|$.
    \end{itemize}

    We now notice that $v_1(S) = \max_{k\in [m+1]}\sum_{t\in S}a_k(t)$. We should add that this holds only if the value of $X$ is large enough.

    The unit-demand bidder values each item for $1$: $v_2(S) = \mathbb{1}(|S|>0)$. The optimal solution is to give the whole bundle to the XOS bidder. If we endow the valuation of the unit-demand bidder by $\g$ for any item $j$, it would make it $v_2'(S) = \max\big(\mathbb{1}(|S|>0),\; \g\cd\mathbb{1}(j\in S)\big)$. Even if the unit demand bidder had this valuation, then the optimal allocation still is to give the XOS bidder all the items, making the valuations $\g$-stable.

    Now we are going to prove that this instance does not admit a Walrasian Equilibrium. To do that let us assume that there exist prices $p_1,..,p_m$ that form a Walrasian Equilibrium. Denote with $j$ the item whose price is minimum, i.e. $j\in\argmin_{j\in M}p_j$. The XOS bidder's utility for the cheapest item should be at most his utility for the grand bundle:
    \begin{equation*}
        X + \g + 1 - \sum_{t\in M}p_t \geq X - p_j
    \end{equation*}
    By reordering and using the fact that $j$ has the minimum price we get
    \begin{equation}\label{eq:XOS_utility}
        \g + 1 \geq (m-1) p_j
    \end{equation}

    Now we also use the fact the the utility of the unit-demand bidder for taking item $j$ should not be strictly positive, since we have a WE
    \begin{equation}\label{eq:unit_demand_utility}
        1 - p_j \leq 0
    \end{equation}

    By combining equations \ref{eq:XOS_utility} and \ref{eq:unit_demand_utility} we get $\g + 1 \geq m - 1$. Due to our original assumption that $m > \g + 2$, we get a contradiction which concludes the proof.     
\end{proof}

\section{Mechanism Design For Stable Combinatorial Auctions}
\label{s:mechanisms}

In this section, we investigate truthful mechanism design for CAs with stable submodular valuations. We should emphasize that despite Theorem~\ref{thm:OptSubA}, VCG cannot be used as a computationally efficient DSIC mechanism for stable subadditive CAs, because the subinstances $\vec{v}_{-i}$, whose optimal solutions determine the payments, may not be stable and may be NP-hard to solve optimally (e.g., adding a bidder with additive valuation that has a huge value for each singleton to any CA results in a stable instance).

We first present a truthful extension of Algorithm~\ref{AlgoSubA}, which is implemented as a Parallel 2nd Price Auction (P2A) and also uses a demand query for each bidder. 

\begin{mechanism}[t]
\label{mech:DSIC}
\caption{Extended Parallel 2nd Price Auction (EP2A)}

\SetKwInOut{Input}{Input}
% \SetKwInOut{Output}{Output}

    \Input{Value and demand query access to valuations $\vv = (v_1, \ldots, v_n)$.}
    % \Output{Allocation $\vS$ and prices $(p_1,...,p_m)$}
    \smallskip
    
    For all bidders $i$ and items $j$, query $v_i(j)$ and let $b_{ij}$ denote the response.\\
    Set the price $p_j$ of each item $j$ to its second highest bid.\\
    For each bidder $i$, let $S_i$ be the set of items for which $i$ has the highest bid.\\
    Bidder $i$ receives his demand from $S_i$, where each item has price $p_j$. \\
    Bidder $i$ pays the total price for his demand from $S_i$.\\
    % \Return{$(S_1,...S_n)$, $(p_1,..,p_m)$} \\
\end{mechanism}

\begin{theorem}\label{thm:OptSubM}
Mechanism~\ref{mech:DSIC} uses $nm$ value queries and $n$ demand queries, and is DSIC for any CA with subadditive valuations. Moreover, if the valuations profile $\vv$ is $2$-stable submodular, Mechanism~\ref{mech:DSIC} returns the optimal allocation. 
\end{theorem}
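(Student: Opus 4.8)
The plan is to establish the three claims in order. The query bound is immediate from the mechanism: line~1 issues one value query per bidder--item pair ($nm$ in total), and line~4 issues exactly one demand query per bidder ($n$ in total). For DSIC, the key structural fact is that a bidder $i$'s report influences his own outcome only through the set $S_i$ of items on which he is a highest bidder: for every $j\in S_i$ the posted price $p_j=\max_{k\neq i}b_{kj}$ does not depend on $b_{ij}$, and bids on items $i$ does not win are irrelevant to $i$ since he can never receive them. Hence, fixing the other bidders' bids and writing $q_j=\max_{k\neq i}b_{kj}$, any strategy of $i$ yields utility at most $\max_{T\subseteq S_i}\bigl(v_i(T)-\sum_{j\in T}q_j\bigr)$, which is attained by answering the demand query on $S_i$ truthfully. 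This quantity is monotone in $S_i$ with respect to inclusion, so it is bounded by $U^{\star}:=\max_{T\subseteq M}\bigl(v_i(T)-\sum_{j\in T}q_j\bigr)$. I would then show that truthful bidding attains $U^{\star}$: truthful bids make $S_i\supseteq\{j:v_i(j)>q_j\}$, and by subadditivity an optimal bundle $T^{\star}\subseteq M$ for $U^{\star}$ may be assumed to avoid every $j$ with $v_i(j)\le q_j$, since for such a $j\in T^{\star}$ subadditivity gives $v_i(T^{\star})-\sum_{t\in T^{\star}}q_t\le v_i(T^{\star}\setminus j)-\sum_{t\in T^{\star}\setminus j}q_t$, so $j$ can be dropped; iterating, some utility-maximizing bundle lies inside $\{j:v_i(j)>q_j\}\subseteq S_i$. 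Thus truthful bidding is a dominant strategy, and it guarantees non-negative utility because the empty bundle is always a valid demand response. Ties in line~2 can be broken by any fixed rule: an item $j$ with $v_i(j)=q_j$ contributes $0$ to $i$'s utility and may be dropped no matter who wins it, so ties do not affect the argument.

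For the optimality claim, I would assume all bidders bid truthfully ($b_{ij}=v_i(j)$) and that $\vv$ is $2$-stable and submodular. The first step is to prove $S_i=O_i$ for every bidder $i$. For $j\in O_i$ and $k\neq i$, Lemma~\ref{LemmaGen} with $\g=2$ gives $v_i(j)>v_k(j)$, so $i$ is the unique highest bidder on every item of $O_i$ and therefore $O_i\subseteq S_i$. Since $\mathcal{O}$ partitions $M$, the sets $S_i$ are pairwise disjoint, and since every item has a (unique) highest bidder, $\bigcup_i S_i=M$; combining $\sum_i|O_i|=m=\sum_i|S_i|$ with $O_i\subseteq S_i$ forces $O_i=S_i$ for all $i$. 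The second step is to show that bidder $i$'s demand from $S_i=O_i$ at the prices $p_j=\max_{k\neq i}v_k(j)$ is exactly $O_i$. By Lemma~\ref{LemmaGen}, $p_j<v_i(O_i)-v_i(O_i\setminus j)=v_i(j\mid O_i-j)$ for every $j\in O_i$; by submodularity, for any $S\subseteq O_i$ with $j\notin S$ we have $v_i(j\mid S)\ge v_i(j\mid O_i-j)>p_j$, so adding any missing item of $O_i$ strictly increases $i$'s utility. Hence $O_i$ is the unique utility-maximizing subset of $S_i=O_i$, the demand query returns $O_i$, and the mechanism outputs $(O_1,\ldots,O_n)=\mathcal{O}$. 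Alternatively, since $p_j$ lies in the price range of Theorem~\ref{thm:WalEq}, these prices form a Walrasian equilibrium in which $i$ demands $O_i$, which also gives the conclusion.

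The main obstacle is the DSIC argument, specifically ruling out profitable overbidding: a bidder might try to ``grab'' extra items by inflating their bids, and what rescues truthfulness is that the concluding demand query lets him discard any grabbed item whose standalone price $q_j$ exceeds its value, while subadditivity guarantees that discarding such items never lowers utility. Everything else is either immediate (the query counts) or a short combination of Lemma~\ref{LemmaGen} and Theorem~\ref{thm:WalEq} with a counting argument (the optimality claim).
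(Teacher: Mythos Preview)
Your proposal is correct and follows essentially the same approach as the paper: for DSIC you use exactly the two observations the paper sketches (overbidding only wins items with $q_j\ge v_i(j)$, which subadditivity lets you drop from the demand; underbidding only shrinks $S_i$), and for optimality you invoke Lemma~\ref{LemmaGen} and Theorem~\ref{thm:WalEq} just as the paper does via Theorems~\ref{thm:OptSubA} and~\ref{thm:WalEq}. Your write-up is more explicit---you frame DSIC as achieving the global benchmark $U^\star$ and spell out the counting argument for $S_i=O_i$---but the underlying ideas and the dependencies on prior results are identical.
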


\begin{proof}
First, we show that Mechanism~\ref{mech:DSIC} is DSIC for subadditive bidders. We focus on the bidding step, because assuming that each set $S_i$ is determined in a truthful way, it is always in each bidder's best interest to respond to his demand query truthfully. 
    
We observe that no bidder has incentive to bid lower than his singleton value for an item. Bidding lower could only lead to the bidder losing some items, thus restricting the set of items available to him through his demand query. Moreover, no bidder has incentive to bid higher than his singleton value for an item. This would only entail having access to an item that has price at least his actual singleton value. However, because of subadditivity, the bidder does not include such an item in his demand set. 

The fact that Mechanism~\ref{mech:DSIC} computes an optimal allocation for $2$-stable submodular valuations is an immediate consequence of Theorem~\ref{thm:OptSubA} and Theorem~\ref{thm:WalEq}. 
\end{proof}

Next, we show that a P2A (Mechanism~\ref{mech:P2A}), that uses only value queries, is ex-post incentive compatible when restricted to $2$-stable submodular valuations profiles. The proof of the following is an immediate consequence of Theorem~\ref{thm:WalEq} and Theorem~\ref{thm:OptSubA}.

%If the valuations are $2$-stable and submodular and each bidder bids truthfully, the price of each item is the left endpoint of the interval in Theorem~\ref{thm:WalEq}. Hence, we have a Walrasian equilibrium and no bidder has an incentive to deviate. Optimality follows directly from Theorem~\ref{thm:OptSubA}. A detailed proof 

\begin{mechanism}[bt]
\caption{Parallel 2nd Price Auction(P2A)}
\label{mech:P2A}
    \SetKwInOut{Input}{Input}
    \Input{Value query access to valuations $\vv = (v_1, \ldots, v_n)$.}
    % \Output{Allocation $\vS$ and prices $(p_1,...,p_m)$}
    \smallskip
    
    For all bidders $i$ and items $j$, query $v_i(j)$ and let $b_{ij}$ denote the response.\\
    Set the price $p_j$ of each item $j$ to its second highest bid.\\
    For each bidder $i$, let $S_i$ be the set of items for which $i$ has the highest bid.\\
    Bidder $i$ receives $S_i$ and pays $\sum_{j \in S_i} p_j$\,.\\
    % \Return{$(S_1,...S_n)$, $(p_1,..,p_m)$} \\    
\end{mechanism}

\begin{theorem} \label{thm:P2AisEPIC}
    Mechanism~\ref{mech:P2A} uses $nm$ value queries and is EPIC for any CA with $2$-stable submodular valuations. Moreover, under truthful bidding, 
    Mechanism~\ref{mech:P2A} computes the optimal allocation. 
\end{theorem}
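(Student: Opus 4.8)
The plan is to reduce everything to the two results already in hand, Theorem~\ref{thm:OptSubA} and Theorem~\ref{thm:WalEq}, and to argue separately about (i) the allocation under truthful bidding and (ii) ex-post incentive compatibility. For part (i): under truthful bidding, each bidder $i$ reports $b_{ij} = v_i(j)$ for every item $j$. Mechanism~\ref{mech:P2A} then allocates each item $j$ to the bidder with the largest $v_i(j)$, which is exactly what Algorithm~\ref{AlgoSubA} does. Since $\vv$ is $2$-stable and submodular (hence subadditive), Theorem~\ref{thm:OptSubA} gives that this allocation is the optimal allocation $(O_1,\ldots,O_n)$. (One should note that Lemma~\ref{LemmaGen} guarantees $v_i(j) > v_k(j)$ strictly for the optimal owner $i$ of $j$, so there are no ties and the allocation is well-defined.)

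For part (ii), EPIC, I would argue that truthful bidding is an ex-post Nash equilibrium by exhibiting it as a Walrasian Equilibrium and invoking the standard ``WE $\Rightarrow$ no profitable deviation'' fact. Concretely: under truthful bidding the allocation is $(O_1,\ldots,O_n)$ and the price of each item $j$ is its second-highest singleton bid, i.e. $p_j = \max_{k \neq i} v_k(j)$ where $i$ is the owner of $j$. This is precisely the lower endpoint of the price interval in Theorem~\ref{thm:WalEq} (and it lies in that interval, since Lemma~\ref{LemmaGen} combined with submodularity — see the third bullet in the intended Corollary — gives $\max_{k\neq i} v_k(j) < v_i(j \mid O_i - j)$). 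Hence, by Theorem~\ref{thm:WalEq}, the pair $((O_1,\ldots,O_n),(p_1,\ldots,p_m))$ is a Walrasian Equilibrium, so each bidder $i$ receives a utility-maximizing bundle at those prices: $v_i(O_i) - p(O_i) \geq v_i(S) - p(S)$ for all $S \subseteq M$. Non-negativity of utility follows since $\emptyset$ is always available and $v_i(\emptyset)=0$.

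It remains to convert the Walrasian optimality into the ex-post Nash condition, which is where the only real care is needed: I must check that no unilateral misreport by bidder $i$ lets him obtain a bundle-price pair outside the ``menu'' $\{(S, p(S)) : S \subseteq M\}$ that the WE prices already make available to him. The point is that when the other bidders bid truthfully, any report of bidder $i$ results in $i$ winning some set of items, each at a price equal to the truthful second-highest bid on it; every item $j$ that $i$ could possibly win this way has price at least $\max_{k\neq i} v_k(j) = p_j$ if $j \in O_i$, and $i$ can win any item $j \notin O_i$ only at a price $\geq v_i(j)$ (since some other bidder bids $v_i(j)$ or more on it — here we again use that $i$ is not the optimal owner, via Lemma~\ref{LemmaGen}). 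Thus the set of (bundle, total-price) outcomes reachable by misreporting is dominated, outcome-by-outcome, by what is already available at the WE prices, so the WE inequality $v_i(O_i) - p(O_i) \geq v_i(S) - \sum_{j\in S} p'_j$ rules out any profitable deviation. The main obstacle is exactly this bookkeeping step: being careful that a misreport cannot hand bidder $i$ an item $j \notin O_i$ at a price strictly below $v_i(j)$, and cannot hand him an item $j \in O_i$ more cheaply than $p_j$ — both of which follow from the fact that the other bidders bid their true singleton values and from the strict inequalities of Lemma~\ref{LemmaGen}. Finally, ``under truthful bidding Mechanism~\ref{mech:P2A} computes the optimal allocation'' is just the conclusion of part (i).
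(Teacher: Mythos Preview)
Your proposal is correct and reaches the same conclusion as the paper, but the EPIC step is organised differently. The paper argues EPIC directly, without invoking Theorem~\ref{thm:WalEq}: fixing the other bidders at truthful reports, it supposes bidder $i$ profitably obtains some $S\neq O_i$ and shows a contradiction by two marginal checks --- for any $j\in S\setminus O_i$, dropping $j$ strictly helps (submodularity gives $v_i(j\mid S-j)\le v_i(j)$, while the price $\max_{k\neq i}v_k(j)>v_i(j)$ by Lemma~\ref{LemmaGen}), and for any $j\in O_i\setminus S$, adding $j$ strictly helps (submodularity gives $v_i(j\mid S)\ge v_i(j\mid O_i-j)$, which exceeds the price again by Lemma~\ref{LemmaGen}). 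This is, in effect, a re-derivation of the demand computation inside the proof of Theorem~\ref{thm:WalEq} at the second-price prices.

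Your route instead packages both checks into a single appeal to Theorem~\ref{thm:WalEq}: the truthful outcome is a Walrasian equilibrium $(\vec O,p)$, and any unilateral deviation yields some bundle $S$ at per-item prices $p'_j\ge p_j$, whence $v_i(S)-\sum_{j\in S}p'_j\le v_i(S)-p(S)\le v_i(O_i)-p(O_i)$. That is more conceptual and arguably cleaner. One small imprecision to fix: for $j\notin O_i$ you record only $p'_j\ge v_i(j)$, but the WE comparison needs $p'_j\ge p_j$, and $v_i(j)\le p_j$ in general. The stronger inequality does hold --- the deviation price is exactly the owner's singleton value $v_l(j)$, which strictly exceeds $p_j=\max_{k\neq l}v_k(j)$ --- so you just need to state that instead. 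With that adjustment your argument is complete; the two proofs carry the same content, yours reusing Theorem~\ref{thm:WalEq} as a black box and the paper's reusing its proof.
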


\begin{proof}
    The number of value queries immediately comes from the mechanism's description. The fact that it finds the optimal allocation comes directly from \cref{thm:OptSubA}; the same bundles are allocated, which was proven to be optimal even for subadditive bidders.

    % As mentioned before, the mechanism is EPIC because of \cref{WalEq}: Given a bidder $i$ and that the other bidders' bids are truthful, no matter $i$'s bids, the items' prices are those of a Walrasian Equilibrium. This means that if $i$ bids truthfully he is guaranteed to take his optimal set, which is his demand.

    To prove that the mechanism is EPIC, we fix a bidder $i$, his valuation $v_i(\cdot)$ and suppose that any other bidder $k$ bids his singleton value $v_k(j)$ for each item $j\in M$. We need to prove that $i$ has nothing to gain if he bids untruthfully.

    We will show this by contradiction. Suppose that $i$ makes a different bid and ends up with higher utility. In order to change his utility he must change either his allocated set or the payments. Since the payments are independent of his bids, he must to be allocated a set $S$ different than $O_i$. First we prove that it must hold that $S\sub O_i$.

    Let's assume the opposite, that there exists a $j$ for which $j\in S$ and $j\notin O_i$. Then $i$'s utility by dropping $j$ would decrease by $V = v_i(S)-v_i(S-j)$, but it would increase by $P = \max_{k\neq i}v_k(j)$. Because $j$ belongs to some $O_k$, by Lemma~\ref{LemmaGen}, and $2$-stability $P > v_i(j)$. Also because of submodularity, $V < v_i(j)$. This shows that $i$ has strictly higher utility if he drops any items not in $O_i$. Now we have to disprove that $S\subset O_i$.

    By contradiction, suppose that there exists a $j$ such that $j\in O_i$, but $j\notin S$. Then $i$'s utility by acquiring $j$ would increase by $V = v_i(S\cup j)- v_i(S)$ and decrease by $P = \max_{k\neq i}v_i(j)$. From Lemma~\ref{LemmaGen}, we know that $P < v_i(S\cup j) -v_i(S)$, which means that bidder $i$ gains strictly positive utility by obtaining any item in $O_i$.

    The theorem is now proven. We have shown that bidder $i$ strictly loses utility from items outside of $O_i$ and strictly gains utility from items in $O_i$. 
\end{proof}

Interestingly, Mechanism~\ref{mech:P2A} is not DSIC even when restricted to submodular CAs. The reason is that even the true bidder valuations profile may be $2$-stable, their bids might be not. Hence, it may happen that bidder $k$ bids higher than his real singleton value on some item $j$, but $j$ is allocated to different bidder $i$. This may increase $p_j$ to a level that is no longer profitable for bidder $i$ to get item $j$ (which is exactly the reason that we employ the demand queries in Mechanism~\ref{mech:DSIC}). 

The remark above naturally motivates the question about existence of a computationally efficient DISC mechanism that computes the optimal allocation for $2$-stable submodular CAs using only value queries. Rather surprisingly, the following answers this question in the negative. 

\begin{theorem}\label{thm:DSIC_no_exist}
    Let $\mA$ be any mechanism that is DSIC, uses only value queries and finds the optimal solution for $\g$-stable submodular valuations, for some $\g\geq 1$. Then $\mA$ makes exponentially many value queries.
\end{theorem}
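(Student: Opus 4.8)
The plan is to adapt the value-query lower bound of \cite{Dob11} to the stable setting. The key tension to exploit is precisely the one flagged in the paragraph before the theorem: a bidder who wants to change the mechanism's behavior on a hard-to-distinguish instance may do so by misreporting a valuation that turns a \emph{non-stable} instance into a \emph{stable} one. Since $\mA$ is only promised to be optimal on $\gamma$-stable profiles, we must be careful to feed it genuinely $\gamma$-stable profiles, but the DSIC guarantee constrains $\mA$'s behavior even on the off-path (possibly non-stable) reports, and that is what lets us smuggle in a large family of indistinguishable instances.

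Concretely, I would first fix a small number of bidders (two should suffice, as in \cite{Dob11}) and build a family of submodular valuations parameterized by an unknown ``special'' bundle $T$ of size $k$ (or a balanced pair of bundles, one per bidder), designed so that: (i) for each choice of $T$ the resulting profile is $\gamma$-stable, with a unique optimal allocation that \emph{depends on $T$}, so that an optimal mechanism must in effect identify $T$; and (ii) any two such profiles agree on the value of every bundle except a small collection related to $T$, so that distinguishing them requires $\binom{m}{k}$-type many value queries — exactly the combinatorial obstruction already used in the proof of Theorem~\ref{thm:Under2Stable}, whose valuation gadget $v^O$ can be recycled here. To force the optimal allocation to track $T$ while keeping stability with a large $\gamma$, I would pad with an additive bidder (or additive component) whose huge singleton values make the instance $\gamma$-stable for any prescribed $\gamma$ — this is the same trick noted in the text (``adding a bidder with additive valuation that has a huge value for each singleton\ldots results in a stable instance''), and it decouples the stability parameter from the hard core of the construction.

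Next comes the incentive argument, which is the heart of the adaptation and the main obstacle. Using DSIC, I would show that if $\mA$ used subexponentially many value queries, then there would be a profile in the family on which $\mA$ fails to identify $T$, hence fails to be optimal; the DSIC property is invoked to rule out the escape route whereby the mechanism ``learns'' $T$ through some clever query-and-payment scheme that is not robust to misreports. The argument should run: suppose truthful reporting by bidder $1$ is the profile $(v_1^T, v_2)$; consider bidder $1$ deviating to $v_1^{T'}$; because $\mA$ makes few queries it cannot distinguish $T$ from $T'$ on the queries it actually issues in one of the two runs, so it returns the same allocation and payments in both; then for one of the two true types, bidder $1$ strictly gains by reporting the other, contradicting DSIC — unless $\mA$'s allocation already coincided with the $T$-optimal one, which for few queries cannot hold for all $T$ simultaneously. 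Making this rigorous requires a careful accounting of which profiles in the family are actually $\gamma$-stable (so that the optimality promise bites) versus which serve only as deviation targets, and a decision-tree argument bounding, over the family, the number of leaves $\mA$'s query tree can have. I expect the delicate point to be arranging the valuation gadgets so that the ``deviation targets'' are themselves legitimate submodular valuations and so that the payment-independence/indistinguishability step goes through cleanly; once that is set up, the counting is the routine $\log_2 \binom{m}{k}$ bound.

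Finally I would assemble the pieces: fix $k$ linear in $m$ (e.g.\ $k = m/3$) so that $\binom{m}{k}$ is exponential, conclude that $\mA$'s query tree has exponentially many leaves, and therefore some branch uses exponentially many value queries. Throughout, stability is maintained by the additive padding, so the construction works for every $\gamma \geq 1$, matching the statement of the theorem.
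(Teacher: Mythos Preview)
Your plan has a genuine gap that makes it unworkable for $\gamma\geq 2$, which is exactly the interesting regime. The additive-padding trick you invoke (``adding a bidder with additive valuation that has a huge value for each singleton\ldots'') makes the instance stable \emph{because} that bidder receives every item in the unique optimum; the optimal allocation then no longer depends on the hidden set $T$, so requirement (i) of your construction collapses. More fundamentally, for any $\gamma\geq 2$ Theorem~\ref{thm:OptSubA} already computes the optimal allocation with $nm$ value queries. Hence there simply is no exponential family of $\gamma$-stable submodular profiles whose optimal allocations are information-theoretically hard to tell apart by value queries; the decision-tree counting you envision cannot produce exponentially many distinct optimal leaves on stable inputs. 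The lower bound must therefore come from the DSIC constraint applied to \emph{non-stable} misreports, not from optimization hardness on stable ones, and your sketch never sets up such a leverage point.

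The paper's proof follows a different route, closer to the taxation-principle argument of \cite{Dob11}. Fix bidder~1's (additive) report; DSIC implies bidder~2 faces a menu $\{(S,p_S)\}$. The optimality guarantee on \emph{stable} inputs is used only to probe this menu: for every bundle $O$, the specific additive report $v_2(S)=|S\cap O|+|S\setminus O|/m^2$ makes the profile $\gamma$-stable with optimum $O$, so bidder~2 must receive $O$, and DSIC then pins down two-sided bounds on $p_T-p_S$. These bounds yield an exponentially large structured submenu in the sense of \cite[Definition~3.2]{Dob11}. Only at the final step does one introduce the hard submodular gadget $v^O$ (the one from \cite[Lemma~3.10]{Dob11}, not the Theorem~\ref{thm:Under2Stable} gadget); the resulting profile is \emph{not} stable, but DSIC still forces the mechanism to allocate bidder~2 his demand from the menu, which is $O$, and locating $O$ among the submenu requires exponentially many value queries. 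In short: stable profiles establish the price structure, DSIC on non-stable profiles creates the query hardness. Your proposal conflates these two roles.
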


\begin{proof}
The proof is an interesting adaptation of the proof of \cite[Theorem~3.1]{Dob11}. For the proof, we use instances with just $2$ bidders. Fixing one to be additive, the other may bid ``stably'' and get any bundle. However, due to the structure of his true valuation, finding his demand may be intractable, which makes misreporting a profitable strategy. 

To reach a contradiction, we assume that $\mA$ is DSIC, makes polynomially many value queries and always finds the optimal solution for $\g$-stable submodular valuations, for some fixed $\g\geq 1$. First we establish the following, which helps determining whether a set of additive valuations is stable.

\begin{proposition}\label{remark:stable_additive}
    Let $\vv$ be a profile with additive valuations. Then, for any $\g\geq 1$, $\vv$ is $\g$-stable if for any item $j \in M$, the largest value for $j$ in $\vv$ differs from the second largest value for $j$ in $\vv$ by a factor larger than $\gamma$. Namely, if $i =\arg\max_{k\in [n]}\{ v_k(j)\}$, then $v_i(j) > \g v_k(j)$, for all bidders $k \neq i$.
\end{proposition}

\begin{proof}[Proof of Proposition~\ref{remark:stable_additive}]
    The proposition follows directly from the fact that endowing an additive bidder $k$ for an item $j$ keeps him additive and inflates his singleton value by a factor of $\g$. 
\end{proof}

For the rest of the proof, we consider $2$ bidders and fix the valuation according to which bidder $1$ makes his bids: $v_1(S) = |S|/m$. $v_1(S)$ is not necessarily the true valuation of bidder $1$. We fix his true valuation to be also additive, with the value of each item large enough. This valuation, together with any other bounded and submodular valuation, results in a valuations profile that is submodular and stable (for a large enough stability factor). 

We next prove that bidder $1$ can get any bundle.

\begin{proposition}\label{remark:menu}
    For any bundle $O$, bidder 2 will be allocated $O$, if he bids according to
    \begin{equation}\label{eq:stable_additive}
        v_2(S) = |S\cap O| + \frac{|S - O|}{m^2}
    \end{equation}
\end{proposition}

\begin{proof}[Proof of Proposition~\ref{remark:menu}]
    First we fix any bundle $O \subseteq M$. By Proposition~\ref{remark:stable_additive}, valuations $(v_1, v_2)$ are $(m-\e)$-stable. Taking $m$ large enough makes the valuations $\g$-stable. Given that they are also additive, and thus submodular, we get that the mechanism $\mA$ (which, by hypothesis, computes the optimal allocation for $\gamma$-stable instances) allocates $O$ to bidder $2$. 
\end{proof}

Next, we show that the prices set by $\mA$ for bidder 2 are bounded and strictly increasing.

\begin{proposition}\label{remark:prices}
    For any bundle $T$ and any $S\sub T$, it holds that
    \begin{equation}\label{eq:prices}
        \frac{|T| - |S|}{m^2} \leq p_T - p_S \leq |T| - |S|
    \end{equation}
    where $p_S$ and $p_T$ are the prices of bundles $S$ and $T$ assigned from $\mA$ for bidder 2.
\end{proposition}

\begin{proof}[Proof of Proposition~\ref{remark:prices}]
    We examine what happens when bidder 2 bids according to \eqref{eq:stable_additive}. First we set $O=T$, which means that bidder 2 will receive $T$. Since $\mA$ is DSIC, bidder 2 should not prefer $S$ over $T$, i.e., $v_2(T) - p_T \geq v_2(S) - p_S$. This implies the rhs of \eqref{eq:prices}, because $v_2(T) = |T|$ and $v_2(S) = |S|$ (since $S\sub T$).

The argument for the lhs of \eqref{eq:prices} is symmetric. We let bidder 2 bid according to \eqref{eq:stable_additive}, where $O=S$. Then, bidder 2 should not prefer $T$ over $S$, i.e., $v_2(S) - p_S \geq v_2(T) - p_T$. Since $v(S) = |S|$ and $v(T) = m\cd|S| + |T-S|/m^2$, we get the lhs of \eqref{eq:prices}. 
\end{proof}

%\begin{corollary}\label{col:max_min_price}
We also note that setting $S=\emptyset$ in \eqref{eq:prices}, we get that 
$|T|/m^2 \leq p_T \leq |T|$.

We now create an exponentially large structured submenu, as in \cite[Definition~3.2]{Dob11}. This concludes the proof, since the existence of such a submenu entails that $\mA$ requires exponentially many value queries to find the demand of bidder 2, as shown in \cite[Lemma~3.10]{Dob11}. For completeness, we recall that a collection of bundles $\mS$ comprises a \emph{structured submenu} for bidder 2 if:
\begin{enumerate}
    \item For all $S\in\mS$, bidder 2 can be allocated $S$.
    \item For each $S,T\in\mS$: $|S|=|T|$ and $|p_S - p_T| \leq \frac{1}{m^5}$.
    \item For all $S,T \subseteq M$ such that $S\in\mS$ and $S\subset T$: $p_T - p_S \geq \frac{1}{m^3}$.
    \item For all $S\in\mS$: $p_S\leq m$.
\end{enumerate}

Since bidder 2 can get any bundle, the first property is satisfied. Also by Proposition~\ref{remark:prices}, the third property is satisfied, because for any $S\subset T$: $p_T - p_S > 1/m^3$.

To create the structured submenu, we fix $k=m/2$ and consider all the $\binom{m}{k}$ different bundles of size $k$. Our submenu is a subset of those bundles, which immediately satisfies the first part of the second property. Also since $|T|/m^2 \leq p_T \leq |T|$, the price of each bundle is at most $k$, which implies the last property.

We need to show the last part of the second property. To this end, following the construction of \cite[Section~3.1]{Dob11}, we split the interval $[0,m]$ into $m^5$ bins. For each bundle $S$ of size $k$, we put $S$ in the $i$-th bin if $p_S \in \big[i/m^5, (i+1)/m^5\big)$. Since there are $m^5$ bins and $\binom{m}{k}$ bundles, one bin must have exponentially many bundles. Let $\mS$ be the set of bundles in such a bin. Notice that the bundles of the same bin have prices which differ less $1/m^5$, thus satisfying the last part of the second property. 

This completes the proof that there is an exponentially large collection $\mS$ of bundles that comprises a structured submenu. To conclude the proof, we apply \cite[Lemma~3.10]{Dob11}. For completeness, we include the technical details in Appendix~\ref{s:app:exponential}.
\end{proof}

A natural question is whether one could also follow the first part of the proof of \cite[Theorem~3.1]{Dob11}, in order to get a much stronger  inapproximability bound of $m^{-1/2+\e}$, for any $\e > 0$. Unfortunately the answer is negative, because the polar additive valuation profiles in \cite[Section~3.1]{Dob11} are far from stable. This explains the necessity of our careful construction of stable valuation profiles, in the first part of the proof of Theorem~\ref{thm:DSIC_no_exist}.

%If two bidders value an item for $1$, stability fails. If we modify the proof to allow polar additive functions that make exactly one bidder value an item by $1$, then another problem arises; fixing all the bidders but one, leaves that bidder have only one stable polar additive strategy, which is to bid 1 for all the items that the other bidders value for $1/m^3$. However this does not allow that bidder to have different strategies to get different bundles, that create an exponentially large submenu.

\section{Price of Anarchy in Stable Combinatorial Auctions}
\label{s:poa}

\subsection{The Price of Anarchy in Parallel 2nd Price Auctions}

%First, we consider P2A for stable CAs. 
For XOS valuations, the PoA of P2A is at least $1/2$ \cite{RST16}. We next show that even for $(3-\e)$-stable valuations, the PoA of P2A does not improve.

\begin{lemma}\label{lem:PoA_counter_example}
There exists a $(3-\e)$-stable profile with unit-demand valuations for which the PoA of P2A is $1/2$.
\end{lemma}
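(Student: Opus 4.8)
The plan is to reuse the two-bidder, two-item unit-demand instance of Example~\ref{ex:AliceBob} and exhibit a pure Nash equilibrium of P2A (Mechanism~\ref{mech:P2A}) on it whose social welfare is exactly half of the optimum. Recall that in that instance the items are $a,b$, the unit-demand valuations are $v_A(a)=2,\ v_A(b)=1$ and $v_B(a)=1,\ v_B(b)=2$, the unique optimal allocation is $(\{a\},\{b\})$ with social welfare $4$, and Example~\ref{ex:AliceBob} already certifies $(3-\e)$-stability for every $\e>0$; hence no further work is needed on the stability side, and the whole content is in constructing the bad equilibrium.

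For the bad equilibrium I would have Alice bid $0$ on $a$ and $1$ on $b$, and Bob bid $1$ on $a$ and $0$ on $b$. These bids respect the no-overbidding convention ($b_{ij}\le v_i(j)$), there are no ties in the realized outcome, and with them Alice wins $\{b\}$, Bob wins $\{a\}$, both items are priced at $0$, and each bidder obtains utility $1$; the realized social welfare is $v_A(b)+v_B(a)=2=\frac12\cdot 4$, giving $\mathrm{PoA}\le 1/2$.

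The only step with real content is checking that this profile is a Nash equilibrium, which I would do for Alice (the case of Bob being symmetric). The key observation is that, with Bob's bids fixed, the second-highest bid on item $a$ equals $\min\{b_{Aa},1\}$, so whenever Alice wins $a$ she pays exactly $1$; therefore any deviation of hers that wins $a$ (alone or together with $b$) yields utility at most $v_A(\{a,b\})-1-0=1$, while any deviation that does not win $a$ yields at most $v_A(b)-0=1$. Hence her current utility $1$ is already optimal and she has no profitable deviation, regardless of how ties (both bidders bidding $0$ on $b$) would be resolved in a hypothetical deviation — so tie-breaking is irrelevant, which is a bookkeeping remark rather than a genuine obstacle. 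Finally, I would combine this equilibrium with the known general lower bound of $1/2$ on the PoA of P2A for XOS (hence unit-demand) valuations \cite{RST16} to conclude that the PoA on this $(3-\e)$-stable instance is exactly $1/2$.
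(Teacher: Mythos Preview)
Your proposal is correct and follows exactly the same approach as the paper: both use the $(3-\e)$-stable instance of Example~\ref{ex:AliceBob} and the very same bidding profile (Alice bids $(0,1)$, Bob bids $(1,0)$) to exhibit a Nash equilibrium of P2A with social welfare $2$. You supply more explicit verification of the equilibrium and explicitly invoke the general $\frac{1}{2}$ lower bound from \cite{RST16} for the matching direction, but the argument is identical in substance.
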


\begin{proof}
The instance in Example~\ref{ex:AliceBob} is $(3-\e)$-stable and has been used to show that the PoA of P2A is at most $1/2$. More precisely, we observe that there is an equilibrium with social welfare $2$: Alice bids $0$ for $a$ and $1$ for $b$ and Bob bids $1$ for $a$ and $0$ for $b$.
\end{proof}

Interestingly, the previous result is tight. With $3$-stable submodular  valuations, every equilibrium is optimal. To prove this, we introduce a no-overbidding assumption, which is weaker than the usual Strong No Overbidding assumption (where each bidder's value for any bundle $S$ is at most the sum of his bids for $S$). We call our assumption Singleton No Overbidding (SiNO), as it restricts the bids to be below each corresponding singleton value. We also note that the bidding profile in Lemma~\ref{lem:PoA_counter_example} SiNO. 

\begin{definition}[Singleton No Overbidding]
    A bidding profile $(\vb_1,...,\vb_n)$ satisfies \textit{Singleton No Overbidding (SiNO)} if for any bidder $i$ and item $j$: $v_i(j) \geq b_{ij}$.
\end{definition}

% \smallskip\noindent{\bf The Price of Anarchy in Parallel 2nd Price Auctions.}
%
Now we are ready to prove that with SiNO, PoA is always $1$ for CAs with $3$-stable submodular valuations.

\begin{theorem}\label{theoremP2AStable}
Let $\vv$ be a $3$-stable submodular valuations profile, and let $\vb$ be a bidding profile that forms a Pure Nash Equilibrium for P2A and satisfies SiNO. Then the allocation at the equilibrium coincides with the optimal allocation.
\end{theorem}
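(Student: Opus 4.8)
The plan is to show that any SiNO pure Nash equilibrium allocation $\vS = (S_1, \ldots, S_n)$ of P2A must coincide with the optimal allocation $\mathcal{O} = (O_1, \ldots, O_n)$. First I would record the equilibrium/SiNO consequences on the item prices. Since bids never exceed singleton values, for any item $j$ the price $p_j$ (second highest bid) satisfies $p_j \le \max_{k \ne w(j)} v_k(j)$, where $w(j)$ is the bidder holding $j$; in particular, if $j \in S_i$ but $j \in O_k$ for $k \ne i$, then $p_j \le v_k(j) \le \max_{\ell \ne i} v_\ell(j)$, which by \Cref{LemmaGen} and $3$-stability is strictly less than $\tfrac12\big(v_i(O_i) - v_i(O_i \setminus j)\big)$ — more precisely, $2 v_k(j) < v_i(O_i) - v_i(O_i\setminus j)$, so $2p_j < v_i(O_i) - v_i(O_i \setminus j)$. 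Likewise, for an item $j \in O_i$ held by someone else, the price is at most the second-highest bid, but bidder $i$ could grab it by bidding slightly above; the relevant bound is that $p_j < \tfrac12 (\gamma-1)$-type marginals from \Cref{LemmaGen}. The factor $3$ (rather than $2$) is exactly what makes both a ``steal'' and a ``drop'' argument work simultaneously, since $\gamma - 1 = 2$ gives a factor-$2$ slack that can absorb the worst case where the same item's value is counted twice.

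Next I would run the standard two-direction exchange argument at the equilibrium, mirroring the proof of \Cref{thm:P2AisEPIC}. Fix bidder $i$. First, show $S_i \subseteq O_i$: if some $j \in S_i \setminus O_i$, then $j \in O_k$ for some $k \ne i$, and by submodularity bidder $i$'s marginal loss from dropping $j$ is at most $v_i(j)$, while the price refund is $p_j \ge$ (the relevant bid of another bidder). I need $p_j \ge v_i(j)$ to conclude dropping $j$ is a profitable deviation, contradicting equilibrium. This follows because $j \in O_k$ means by \Cref{LemmaGen} that $v_k(O_k) - v_k(O_k\setminus j) > 2 v_i(j)$, hence $v_k(j) > 2v_i(j) > v_i(j)$; since $k$ is bidding SiNO it \emph{could} have bid up to $v_k(j)$, but I actually need a lower bound on what \emph{is} bid — here I must be careful: equilibrium forces that nobody who wants $j$ is priced out, but to get $p_j \geq v_i(j)$ I should argue that bidder $k$, who values $j$ highly, must be bidding high enough on $j$ at equilibrium (otherwise $k$ would deviate to grab $j$, using $3$-stability to show $j$ is worth more to $k$ than its current price). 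This is the delicate point. Conversely, to show $O_i \subseteq S_i$: if $j \in O_i \setminus S_i$, then $j$ is held by some $k$ with $S_k \subseteq O_k$ (by the first part applied to $k$), so $j \notin O_k$ — contradiction, unless $j$ is unheld, i.e. unallocated; but the only way $j$ is unallocated is if nobody bids positively, and then bidder $i$ gains $v_i(j \mid S_i) \ge v_i(O_i) - v_i(O_i\setminus j) > 0$ at price $0$, again contradicting equilibrium.

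So the cleanest structure is: (1) prove every SiNO equilibrium allocation has $S_i \subseteq O_i$ for all $i$, using a drop-deviation; (2) deduce $\bigcup_i S_i$ misses only items $j$ that are unallocated, and each such $j$ has $p_j = 0$ and can be freely grabbed by its optimal owner, so in fact $\bigcup_i S_i = M$ and hence $S_i = O_i$ for all $i$. Step (1) is where the $3$-stability is spent: for $j \in S_i \cap O_k$ I need the price $p_j$ to be at least $v_i(j)$; I get this by observing that \emph{at equilibrium} bidder $k$ is not priced out of items in $O_k$ that it would strictly want, and combining with \Cref{LemmaGen} which says $k$'s marginal value for $j$ over $O_k \setminus j$ strictly exceeds $2v_i(j)$ — so if $p_j < v_i(j)$ then $k$'s bid on $j$ plus the second-price rule would let $k$ profitably acquire $j$ (its marginal value exceeds even $2 v_i(j) > p_j$). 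The main obstacle, and the step to write carefully, is exactly this: translating ``equilibrium $+$ SiNO $+$ $3$-stability'' into the pointwise price bound $p_j \ge v_i(j)$ for misallocated items, since prices are second-highest \emph{bids}, not values, and one must rule out the degenerate case where the bidder who values $j$ most is nonetheless bidding low on it — this is prevented precisely because such a bidder would then have a profitable deviation to raise its bid and capture $j$, whose true marginal value to it (via \Cref{LemmaGen}) dominates the resulting price. Once that lemma-level claim is in place, the rest is the routine submodular exchange bookkeeping already used in \Cref{thm:P2AisEPIC}.
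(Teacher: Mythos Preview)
Your plan has a genuine gap at exactly the point you flag as ``delicate''. To show that dropping a misallocated item $j \in S_i \setminus O_i$ (with $j \in O_k$, $k\neq i$) is profitable for bidder~$i$, you want $p_j \geq v_i(j)$, and you argue this by saying that otherwise bidder~$k$ would deviate to grab~$j$. But for $k$'s grab-deviation to be strictly profitable you need $v_k(j \mid S_k) > b_{ij}$, i.e., $k$'s marginal for $j$ \emph{relative to its current equilibrium bundle $S_k$} must exceed the resulting price. \Cref{LemmaGen} only gives $v_k(j \mid O_k \setminus j) > 2\, v_i(j)$; to transfer this to $v_k(j \mid S_k)$ via submodularity you would need $S_k \subseteq O_k \setminus j$, i.e., $S_k \subseteq O_k$ --- which is precisely the statement you are trying to prove (for bidder~$k$ rather than~$i$). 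So the argument is circular: establishing $S_i \subseteq O_i$ for one bidder presupposes it for another. Your write-up conflates ``$k$'s true marginal value (via \Cref{LemmaGen})'' with $v_k(j\mid S_k)$, and nothing in the proposal breaks this cycle.

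The paper's proof avoids the circularity by a global argument rather than a per-item one. Each bidder~$i$ deviates to acquire \emph{all} of $O_i$ at once (bidding just above $w_j=\max_{k\neq i}v_k(j)$ on every $j\in O_i\setminus A_i$, where $A_i=O_i\cap S_i$), which via SiNO guarantees $i$ obtains $O_i$ and yields
\[
v_i(S_i) \;\geq\; v_i(O_i) - \sum_{j\in O_i\setminus A_i} w_j .
\]
Using $w_j\geq v_i(j)$ for $j\in S_i\setminus A_i$ together with submodularity, then \emph{summing over all bidders}, the two sums $\sum_i\sum_{j\in S_i\setminus A_i}w_j$ and $\sum_i\sum_{j\in O_i\setminus A_i}w_j$ collapse to the \emph{same} sum over misallocated items, producing a factor~$2$ on the left. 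Now $3$-stability gives $2w_j < v_i(O_i)-v_i(O_i\setminus j)$ for $j\in O_i$, and a final submodularity step bounds $\sum_{j\in O_i\setminus A_i}\big(v_i(O_i)-v_i(O_i\setminus j)\big)\leq v_i(O_i)-v_i(A_i)$, yielding a strict contradiction whenever any item is misallocated. The key is that this summation never requires controlling any $v_k(j\mid S_k)$ for a possibly-wrong $S_k$; the aggregation over bidders absorbs the cross-bidder dependencies that your local approach cannot handle.
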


%Intuitively, because the gap between the highest and the second highest singleton values has gotten large enough, the bidder who is optimally allocated an item has much more incentive to outbid the other bidders for that item. The intuitive that the valuations must be at least $3$ stable is that bidders should value their optimal items at least twice than the other bidders value them. Otherwise, if the other bidders bid their maximum value, the bidder might prefer to get other items at low prices, which is what happens in \cref{lem:PoA_counter_example}.

\begin{proof}
    We prove the theorem by contradiction. First, denote with $w_j$ the second highest singleton valuation for item $j$, i.e. if $j\in O_i$, $w_j=\max_{k\neq i}v_k(j)$ ($i$ has the highest singleton valuation for $j$, by Lemma~\ref{LemmaGen}). Fix a bidder $i$ who does not get allocated his optimal set, i.e. $O_i\neq S_i(\vb)$. Let $A_i$ be the items that $i$ is allocated at the equilibrium that are also in $O_i$, i.e. $A_i = O_i\cap S_i(\vb)$.
    
    We construct a deviating bid for $i$: For items not in $O_i$, $i$ bids $0$. For items in $A_i$, $i$ bids as before (which means that the prices for these items will be the same as before). Finally for items in $O_i-A_i$, $i$ bids infinitesimally more than $w_j$. Note that this bidding strategy conforms to SiNO. Also because of SiNO, it guarantees that $i$ receives the whole bundle $O_i$. His new utility is

    \begin{equation}\label{ineq5:12}
        u_i' = v_i(O_i) - \sum_{j\in A_i}p_j(\vb) - \sum_{j\in O_i-A_i}\max_{k\neq i}b_{kj}   \geq
               v_i(O_i) - \sum_{j\in A_i}p_j(\vb) - \sum_{j\in O_i-A_i}w_j\,,
    \end{equation}
    where the inequality holds because of SiNO: Every bid must be below the corresponding singleton value, which in turn is less than the maximum of the singleton values. Now we use that $i$' utility in \eqref{ineq5:12} cannot be greater than $i$'s utility at the equilibrium (for brevity, we let $S_i(\vb) = S_i$ from that point on). Hence, we obtain that: 
    \begin{equation}\label{ineq5:13}
        v_i(S_i) - \sum_{j\in S_i}p_j(\vb) \geq v_i(O_i) - \sum_{j\in A_i}p_j(\vb) - \sum_{j\in O_i-A_i}w_j
    \end{equation}

In \eqref{ineq5:13}, we can eliminate the term $\sum_{j\in A_i}p_j(\vb)$, because it is included in the term $\sum_{j\in S_i}p_j(\vb)$. Also, we can ignore the rest of the prices in the lhs, as they are positive. Hence, we obtain that:
    \begin{equation}\label{ineq5:14}
        v_i(S_i) \geq v_i(O_i) - \sum_{j\in O_i-A_i}w_j
    \end{equation}

    Now using that $w_j\geq v_i(j)$, for items $j\in S_i-A_i$, and submodularity, namely that $\sum_{j\in S_i-A_i}v_i(j) + v_i(A_i) \geq v_i(S_i)$, and \eqref{ineq5:14}, we obtain the following:
    \begin{equation}\label{ineq5:15}
        \sum_{j\in S_i-A_i}w_j + \sum_{j\in O_i-A_i}w_j \geq v_i(O_i) - v_i(A_i)
    \end{equation}

    Now because, by definition, both sets $\bigcup_i(S_i-A_i)$ and $\bigcup_i(O_i-A_i)$ consist of the items that were not optimally allocated (and only them), summing up \eqref{ineq5:15} over all bidders $i$, we get that:
    \begin{equation}\label{ineq5:16}
        2\sum_i\sum_{j\in O_i-A_i}w_j \geq \sum_i\big(v_i(O_i) - v_i(A_i)\big) 
    \end{equation}

    Because of Lemma~\ref{LemmaGen} and $3$-stability, for any item $j\in O_i$, we have that $v_i(j|O_i-j) > \frac{1}{2}w_j$. Then, using \eqref{ineq5:16}, we obtain the following:
    \begin{equation}\label{ineq5:17}
        \sum_i\sum_{j\in O_i-A_i}\big(v_i(O_i) - v_i(O_i-j)\big) > \sum_i\big(v_i(O_i) - v_i(A_i)\big) 
    \end{equation}

    For every $i$, it holds that $\sum_{j\in O_i-A_i}\big(v_i(O_i) - v_i(O_i-j)\big) \leq v_i(O_i) - v_i(A_i)$, which contradicts \eqref{ineq5:17}. The latter inequality is true because of submodularity. More specifically, we can index items in $O_i-A_i$ as $1,2,3,\ldots,|O_i-A_i|$ and denote $X_j=\{1,2,...,j\}$, with $X_0=\emptyset$ and $X_{|O_i-A_i|}=O_i-A_i$. Hence, we obtain that:
    \begin{equation}\label{eq5:100}
        v_i(O_i) - v_i(A_i) = \sum_{j=1}^{|O_i-A_i|}\big(v_i(O_i-X_{j-1}) - v_i(O_i-X_j)\big)
    \end{equation}
    
    Due to submodularity, $v_i(O_i-X_{j-1}) - v_i(O_i-X_j) \geq v_i(O_i) - v_i(O_i-j)$. This and \eqref{eq5:100} imply that $\sum_{j\in O_i-A_i}\big(v_i(O_i) - v_i(O_i-j)\big) \leq v_i(O_i) - v_i(A_i)$, which concludes the proof of the theorem.
\end{proof}

We proceed to study the PoA of P2A for the more general class of XOS valuations. For general XOS valuations, the PoA is at least $1/2$, which cannot be improved for $(3-\e)$-stable XOS valuations, due to Lemma~\ref{lem:PoA_counter_example}. We show that as valuations become more stable, the PoA improves.

\begin{theorem}\label{theoremStableP2AXOS}
For any $\g\geq 2$, let $\vv$ be a $\g$-stable profile with XOS valuations. Let $\vb$ be a bidding profile that forms a Pure Nash Equilibrium for P2A and satisfies SiNO. Then the PoA is larger than $\frac{\g-2}{\g-1}$.
\end{theorem}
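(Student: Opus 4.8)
The plan is to follow the same deviation argument as in the proof of Theorem~\ref{theoremP2AStable}, but replace submodularity by the XOS supporting-additive-function property and carry the weaker stability factor $\gamma$ through the bookkeeping. Fix a bidding profile $\vb$ that is a PNE satisfying SiNO. For each bidder $i$ let $S_i = S_i(\vb)$ be the bundle $i$ gets at equilibrium, let $A_i = O_i \cap S_i$, and let $w_j = \max_{k \neq i} v_k(j)$ be the second-highest singleton value for item $j \in O_i$ (by Lemma~\ref{LemmaGen}, $i$ is the unique highest bidder on each $j \in O_i$ under truthful singletons, and SiNO only lowers bids, so the price faced by $i$ when he wants $j \in O_i$ is at most $w_j$). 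As before, bidder $i$ has a deviation that wins exactly $O_i$: bid $0$ off $O_i$, keep the old bids on $A_i$, and bid just above $w_j$ on each $j \in O_i - A_i$; this is SiNO-legal because $v_i(j|O_i - j) > (\gamma-1) w_j \geq w_j$ for $\gamma \geq 2$ (Corollary/Lemma~\ref{LemmaGen}), hence $v_i(j) > w_j$.

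The equilibrium condition gives, after canceling the prices on $A_i$ and dropping the (nonnegative) prices on $S_i - A_i$ from the left side, exactly the analogue of~\eqref{ineq5:14}: $v_i(S_i) \geq v_i(O_i) - \sum_{j \in O_i - A_i} w_j$. Now is where XOS enters. Let $q_i$ be an additive function supporting $S_i$, so $v_i(S_i) = q_i(S_i)$ and $q_i(T) \leq v_i(T)$ for all $T$. Then $v_i(S_i) = q_i(A_i) + q_i(S_i - A_i) \leq v_i(A_i) + \sum_{j \in S_i - A_i} q_i(j) \leq v_i(A_i) + \sum_{j \in S_i - A_i} v_i(j) \leq v_i(A_i) + \sum_{j \in S_i - A_i} w_j$, the last step because each $j \in S_i - A_i$ lies in some $O_k$, $k \neq i$, so by Lemma~\ref{LemmaGen} $w_j \geq v_i(j)$. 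Combining with the displayed inequality yields the XOS analogue of~\eqref{ineq5:15}: $v_i(A_i) + \sum_{j \in S_i - A_i} w_j + \sum_{j \in O_i - A_i} w_j \geq v_i(O_i)$, and summing over $i$, using that $\bigcup_i (S_i - A_i)$ and $\bigcup_i (O_i - A_i)$ are each exactly the set of mis-allocated items, gives $\sum_i v_i(A_i) + 2 \sum_i \sum_{j \in O_i - A_i} w_j \geq \sum_i v_i(O_i) = \mathrm{OPT}$.

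It remains to bound $\sum_i \sum_{j \in O_i - A_i} w_j$ in terms of $\mathrm{OPT}$. Here I invoke $\gamma$-stability in the XOS form (Lemma~\ref{LemmaGen}): for $j \in O_i$, $v_i(O_i) - v_i(O_i - j) > (\gamma - 1) w_j$, i.e. $w_j < \tfrac{1}{\gamma-1}\big(v_i(O_i) - v_i(O_i - j)\big)$. Summing $v_i(O_i) - v_i(O_i - j)$ over $j \in O_i - A_i$ telescopes (exactly as in~\eqref{eq5:100}) to at most $v_i(O_i) - v_i(A_i)$ — but that telescoping step used submodularity, so in the XOS case I instead use the supporting additive function of $O_i$: if $r_i$ supports $O_i$ then $v_i(O_i) - v_i(O_i - j) \geq v_i(O_i) - r_i(O_i - j) = r_i(j)$ isn't the direction I want; rather I want an upper bound, and indeed $\sum_{j \in O_i - A_i}\big(v_i(O_i) - v_i(O_i - j)\big) \le \sum_{j\in O_i-A_i} v_i(j) \le v_i(O_i-A_i) \le v_i(O_i)$, using subadditivity of the XOS valuation and monotonicity; a more careful bound replaces $v_i(O_i)$ by $v_i(O_i) - v_i(A_i)$ via the supporting additive function $r_i$ of $O_i$: $\sum_{j\in O_i - A_i} v_i(j) \ge \sum_{j\in O_i-A_i} r_i(j) = r_i(O_i - A_i) = r_i(O_i) - r_i(A_i) \ge v_i(O_i) - v_i(A_i)$, and this lower bound, combined with $w_j < \tfrac{1}{\gamma-1}(v_i(O_i)-v_i(O_i-j)) \le \tfrac{1}{\gamma-1} v_i(j)$, gives the clean estimate $\sum_i \sum_{j\in O_i-A_i} w_j < \tfrac{1}{\gamma-1}\sum_i\big(\mathrm{something}\big)$; I will need the telescoping inequality in the form $\sum_{j\in O_i-A_i}(v_i(O_i)-v_i(O_i-j)) \le v_i(O_i) - v_i(A_i)$, which for XOS follows because $v_i(O_i) - v_i(O_i-j) \le v_i(O_i) - v_i(O_i - X_j)$ need not hold, so instead I bound each term by $v_i(j)$ and use the supporting-function inequality $\sum_{j\in O_i-A_i} v_i(j) \le$ ... — \emph{this is the delicate point}. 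Granting $\sum_i\sum_{j\in O_i-A_i} w_j < \tfrac{1}{\gamma-1}\sum_i (v_i(O_i)-v_i(A_i)) = \tfrac{1}{\gamma-1}(\mathrm{OPT} - \sum_i v_i(A_i))$ and plugging into $\sum_i v_i(A_i) + 2\sum_i\sum_{j\in O_i-A_i} w_j \ge \mathrm{OPT}$ gives $\sum_i v_i(A_i) + \tfrac{2}{\gamma-1}(\mathrm{OPT} - \sum_i v_i(A_i)) > \mathrm{OPT}$, equivalently $\big(1 - \tfrac{2}{\gamma-1}\big)\sum_i v_i(A_i) > \big(1 - \tfrac{2}{\gamma-1}\big)\mathrm{OPT}$ when $\gamma > 3$ (and a sign-tracking variant for $2 \le \gamma \le 3$), and since the equilibrium welfare $\sum_i v_i(S_i) \ge \sum_i v_i(A_i)$, rearranging yields $\sum_i v_i(S_i)/\mathrm{OPT} > \tfrac{\gamma-2}{\gamma-1}$. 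The main obstacle, as flagged, is the telescoping/supporting-function step that in the submodular proof was equation~\eqref{eq5:100}; for XOS one must argue it via the additive function supporting $O_i$ and be careful about the direction of inequalities, and one must also double-check the inequality directions do not flip for $\gamma < 3$ so that the bound $\tfrac{\gamma-2}{\gamma-1}$ (which is negative, hence vacuous, for $\gamma < 2$ and zero at $\gamma = 2$) comes out correctly across the whole range $\gamma \ge 2$.
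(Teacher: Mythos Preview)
Your approach has a genuine gap, and moreover it is much more complicated than what is needed.

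The paper's proof is a three-line deviation argument with no $A_i$ decomposition at all. Fix bidder $i$, let $q_i$ be the additive function supporting $O_i$, and have $i$ deviate to bid $q_i(j)$ on each $j\in O_i$ and $0$ elsewhere. Since $q_i(j)\ge v_i(O_i)-v_i(O_i\setminus j)>(\g-1)\max_{k\neq i}v_k(j)\ge \max_{k\neq i}b_{kj}$ (Lemma~\ref{LemmaGen} and SiNO, using $\g\ge 2$), $i$ wins all of $O_i$ and pays at most $\sum_{j\in O_i}\max_{k\neq i}v_k(j)<\tfrac{1}{\g-1}\sum_{j\in O_i}q_i(j)=\tfrac{1}{\g-1}v_i(O_i)$. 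The equilibrium condition then gives $v_i(S_i)\ge u_i(S_i)>\tfrac{\g-2}{\g-1}v_i(O_i)$; sum over $i$.

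In your write-up the step you flag as ``the delicate point'' is not merely delicate, it is false for XOS. You need $\sum_{j\in O_i-A_i}\big(v_i(O_i)-v_i(O_i-j)\big)\le v_i(O_i)-v_i(A_i)$, and the supporting function $r_i$ of $O_i$ only gives $\sum_{j\in O_i-A_i}\big(v_i(O_i)-v_i(O_i-j)\big)\le \sum_{j\in O_i-A_i}r_i(j)=v_i(O_i)-r_i(A_i)$, which is $\ge v_i(O_i)-v_i(A_i)$ since $r_i(A_i)\le v_i(A_i)$ --- the wrong direction. The fixable version is the cruder bound $\sum_{j\in O_i-A_i}w_j<\tfrac{1}{\g-1}v_i(O_i)$, but plugging that into your summed inequality $\sum_i v_i(A_i)+2\sum_i\sum_{j\in O_i-A_i}w_j\ge \mathrm{OPT}$ yields only $\sum_i v_i(S_i)\ge\sum_i v_i(A_i)>\tfrac{\g-3}{\g-1}\,\mathrm{OPT}$, strictly weaker than the theorem. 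The factor of $2$ coming from the double-counting of mis-allocated items is exactly what costs you one unit in the numerator; the paper avoids this entirely by never splitting $S_i$ into $A_i$ and $S_i-A_i$.

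Finally, your closing algebra is off: if one \emph{grants} the false telescoping bound, the chain of inequalities gives $(1-\tfrac{2}{\g-1})\,\mathrm{OPT}<(1-\tfrac{2}{\g-1})\sum_i v_i(A_i)$, which for $\g>3$ says $\mathrm{OPT}<\sum_i v_i(A_i)$ --- a contradiction forcing the equilibrium to be optimal, not the ratio $\tfrac{\g-2}{\g-1}$ --- and for $2\le\g\le 3$ says nothing.
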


The intuition is similar to that in the proof of Theorem~\ref{theoremP2AStable}. Even if the prices are as high as possible, as $\g$ gets larger, each bidder has more incentive to prefer the items in his optimal bundle than any other items.

\begin{proof}
    Fix the optimal allocation $(O_1,...,O_n)$ and a bidder $i$. Let $q_i(\cdot)$ be the supporting additive valuation for bidder i for set $O_i$: $\forall S, v_i(S)\geq\sum_{j\in S}q_i(j)$ and $v_i(O_i)=\sum_{j\in O_i}q_i(j)$. We have bidder $i$ deviate and bid according to $\vec{b_i^*}$: he bids $0$ for items not in $O_i$, and for $j\in O_i$ he bids $q_i(j)$.
    % \begin{equation*}
    %     b_{ij}^* =
    %         \begin{cases}
    %             q_i(j), &\text{if }j\in O_i \\
    %             0,      &\text{otherwise}\\
    %         \end{cases}
    % \end{equation*}
	
	 Because of SiNO ($v_k(j) \geq b_{kj}$) and Lemma~\ref{LemmaGen} ($q_i(j) > (\g-1)\cd\max_{k\neq i}v_k(j)$), bidder is guaranteed to obtain the whole bundle $O_i$.
% 	\begin{equation*}
% 		q_i(j)>(\g-1)\cdot v_k(k)\geq v_k(k) \geq b_{kj}		
% 	\end{equation*}
    Because $\vb$ is a Nash Equilibrium it holds that

    \begin{equation} \label{eq66}
        u_i(S_i(\vb)) \geq u_i(S_i(\vec{b_i^*},\vec{b_{-i}}))   =
        v_i(O_i) - \sum_{j \in O_i} \underset{k\neq i}{max}(b_{k j})
    \end{equation}

    Since prices are non-negative, lhs in (\ref{eq66}) is at most $v_i(S_i(\vb))$ and because of SiNO $b_{k j} \leq v_k(j)$. Also because of Lemma~\ref{LemmaGen}: $(\g-1)\cdot v_k(j) < q_i(j)$. Combining all the above in inequality (\ref{eq66}) gives:

    \begin{equation} \label{eq77}
        v_i(S_i(\vb)) > v_i(O_i) - \frac{1}{\g-1}\sum_{j\in O_i}q_i(j)   =
        \frac{\g-2}{\g-1} \; v_i(O_i)
    \end{equation}

    Summing inequality (\ref{eq77}) for all $i$ gives us the lemma and completes the proof.
\end{proof}

\subsection{The Price of Anarchy in Parallel 1st Price Auctions}

% \smallskip\noindent{\bf The Price of Anarchy in Parallel 1st Price Auctions.}
%
We conclude with a lower bound on the PoA of Parallel 1st Price Auctions (P1A) for CAs with stable valuations. If bidders are restricted to a mixed Nash equilibrium, the PoA of P1A for bidders with XOS valuations is at least $1-\fe$. Similarly to Theorem~\ref{theoremStableP2AXOS}, we show that the PoA of P1A increases, as the stability of a XOS valuations profile increases.

\begin{theorem}\label{theoremStableP1A}
For any $\g\geq 2$, let $\vv$ be a $\g$-stable profile with XOS valuations. Let $\vb$ be a bidding profile that forms a Mixed Nash Equilibrium for P1A. Then the PoA is larger than $\frac{\g-2}{\g-1}$.
\end{theorem}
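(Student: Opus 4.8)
The plan is to adapt the smoothness-style argument used for the PoA of P1A with XOS valuations (see e.g.\ \cite{ST14,Rough14}) and combine it with the stability-based gain established in the proof of Theorem~\ref{theoremStableP2AXOS}. Fix the optimal allocation $(O_1,\ldots,O_n)$ and, for each bidder $i$, let $q_i(\cdot)$ be the additive valuation supporting $O_i$, so that $v_i(O_i)=\sum_{j\in O_i}q_i(j)$ and $v_i(S)\geq\sum_{j\in S}q_i(j)$ for all $S$. By Lemma~\ref{LemmaGen} and $\g$-stability, for every $j\in O_i$ we have $q_i(j) > (\g-1)v_k(j)$ for every $k\neq i$. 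The key deviation for bidder $i$ is the randomized bid that, independently for each item $j\in O_i$, bids $\tfrac{\g-2}{\g-1}q_i(j)$ with an appropriate ``shading'' distribution (the standard first-price trick: bid $\beta\cdot q_i(j)$ for $\beta$ drawn from the density that makes the expected payment geometric-like), and bids $0$ on items outside $O_i$.

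First I would show that with this deviation bidder $i$ wins each item $j\in O_i$ essentially whenever no other bidder overbids $q_i(j)$ in expectation; but here the stability of the instance does the heavy lifting. Since in a P1A a bidder who wins $j$ pays only his own bid, and since any competing bidder $k$ in a Nash equilibrium will not bid above $v_k(j)$ on item $j$ (a bid above one's value yields negative utility whenever that bid wins, which happens with positive probability against a full-support P1A equilibrium, so it is dominated), and $v_k(j) < \tfrac{1}{\g-1}q_i(j)$, a bid of anything in $\big(\tfrac{1}{\g-1}q_i(j),\,q_i(j)\big)$ on each $j\in O_i$ deterministically wins the whole bundle $O_i$. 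So in fact the deviation can be taken to be deterministic: bidder $i$ bids $\tfrac{1}{\g-1}q_i(j)+\epsilon$ on each $j\in O_i$ and $0$ elsewhere, which wins $O_i$ and costs at most $\tfrac{1}{\g-1}\sum_{j\in O_i}q_i(j) + m\epsilon = \tfrac{1}{\g-1}v_i(O_i) + m\epsilon$. The Nash condition then gives, for the mixed equilibrium $\vb$,
\begin{equation*}
\E_{\vb}\big[u_i(S_i(\vb))\big] \;\geq\; v_i(O_i) - \frac{1}{\g-1}v_i(O_i) - m\epsilon \;=\; \frac{\g-2}{\g-1}v_i(O_i) - m\epsilon .
\end{equation*}
Since prices (payments) are non-negative, the left side is at most $\E_{\vb}[v_i(S_i(\vb))]$, so $\E_{\vb}[v_i(S_i(\vb))] > \tfrac{\g-2}{\g-1}v_i(O_i)$ after letting $\epsilon\to0$. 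Summing over all bidders $i$ yields $\E_{\vb}[\sw(\vb)] > \tfrac{\g-2}{\g-1}\sum_i v_i(O_i)$, which is the claimed bound.

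The main obstacle is making the ``bidder $k$ never bids above $v_k(j)$ in a mixed Nash equilibrium of P1A'' argument rigorous without a no-overbidding assumption, since unlike the P2A case of Theorem~\ref{theoremStableP2AXOS} the theorem statement here does not invoke SiNO. The subtlety is that in a mixed equilibrium a bidder might place a high bid on $j$ that wins only on a measure-zero event (ties), so one must argue either via a tie-breaking convention or via an $\epsilon$-perturbation that overbidding on a single item is weakly dominated given that it is never strictly helpful and strictly hurts on the winning event. An alternative, cleaner route that sidesteps this entirely is to use the randomized shaded deviation and the standard P1A smoothness inequality $\E[u_i(b_i^*, b_{-i})] \geq \tfrac{\g-2}{\g-1}v_i(O_i) - \E[\text{payments on } O_i \text{ under } b_{-i}]$, summing the payment terms across bidders against a single fixed allocation to telescope them into at most $\E[\sw(\vb)]$; this is the route I would take if the direct domination argument proves delicate. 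Either way, the conceptual content is identical to Theorem~\ref{theoremStableP2AXOS}: stability forces competitors' singleton values below $\tfrac{1}{\g-1}$ of the support weight $q_i(j)$, so recovering the bundle $O_i$ costs a $\tfrac{1}{\g-1}$ fraction of its value, leaving a $\tfrac{\g-2}{\g-1}$ fraction as guaranteed welfare.
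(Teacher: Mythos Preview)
Your proposal follows the same overall shape as the paper's proof: construct, for each bidder $i$, a deviation that secures the bundle $O_i$ at cost roughly $\frac{1}{\g-1}v_i(O_i)$, use the Nash inequality, drop the (nonnegative) equilibrium payments, and sum. The gap is exactly where you suspect it lies, and the paper resolves it in a way that neither of your suggested workarounds does.

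Your primary deviation assumes that no competitor $k$ bids above $v_k(j)$ in a mixed Nash equilibrium of P1A. As you partly acknowledge, this is false: the paper itself exhibits a two-bidder, one-item example where the low-value bidder overbids in equilibrium. So ``overbidding is weakly dominated'' is not available as a blanket statement. What \emph{is} true, and is exactly what the paper proves (Proposition~\ref{ClaimWithProbs}), is the conditional version: if bidder $k$ wins item $j$ with positive probability under $\vD$, then every bid in the support of $D_k$ on $j$ is at most $v_k(j)$. This is what lets the deviating bidder $i$, by bidding just above $w_j=\max_{k\neq i}v_k(j)$, outbid every \emph{relevant} competitor on each $j\in O_i$. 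To make this go through, the paper also splits $O_i$ into the subset $A$ that $i$ already wins with probability~$1$ (on which $i$ keeps the equilibrium bid, so those payment terms cancel) and $O_i\setminus A$ (on which the $w_j$-deviation is used and Proposition~\ref{ClaimWithProbs} applies because some $k\neq i$ wins with positive probability). Your deviation bids $\frac{1}{\g-1}q_i(j)+\e$ on \emph{all} of $O_i$; without the $A$-trick you cannot rule out that on some $j\in A$ a bidder $l$ who never wins $j$ is bidding far above $v_l(j)$, blocking $i$'s deviation.

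Your fallback smoothness route does not recover the stated bound either. The standard shaded-bid argument yields an inequality of the form $\E[u_i(\vb_i^*,\vb_{-i})]\ge \lambda\, v_i(O_i)-\mu\cdot(\text{payments on }O_i)$ with $\mu>0$, and after telescoping and using $\text{revenue}\le \sw$ one gets $\mathrm{PoA}\ge \lambda/(1+\mu)$ rather than $\lambda=\frac{\g-2}{\g-1}$ outright. The whole point of the stability-based deviation is that it wins $O_i$ \emph{deterministically} at a bounded cost, so no $\mu$-term appears; but that determinism is precisely what requires the Proposition~\ref{ClaimWithProbs} argument you have not supplied.
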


For the proof, we observe that as the stability factor $\g$ increases, the valuation of a bidder $i$ for each item $j$ in his optimal bundle becomes considerably larger than the second highest singleton valuation for item $j$. Hence, if bidder $i$ bids the second highest singleton valuation for each item in his optimal bundle, $i$'s utility should be large enough to establish that the allocation of the equilibrium achieves a large enough social welfare.

Before proving Theorem~\ref{theoremStableP1A}, we need to prove a technical proposition. Because in stable auctions the singleton values of the bidders are very important, we would like to prove that in a P1A, no bidder bids higher than his singleton value, like in SiNO. However this is not the case. One can consider a simple example with $2$ bidders and $1$ item. The first bidder has value $1$ and the second $2$. A pure equilibrium is for the first bidder to bid $1.5$ and the second bidder to bid $1.5+\e$.

Since bids can be above singleton values, we are going to prove something similar: A bidder bids higher that his singleton value, only if he is sure that he is not going to receive that item.

\begin{proposition}\label{ClaimWithProbs}
    Let $\vv$ be subadditive valuations and $\vD = (D_1,...,D_n)$ a profile of bid vector distributions that forms a MNE for P1A. Then for every $j$ and $i$:

    \begin{equation*}
        \textit{If } \Pr_{\vb\sim \vD}[j\in S_i(\vb)]>0, \textit{ then } \max_{\vb_i\sim D_i} (b_{ij}) \leq v_i(j)
    \end{equation*}
\end{proposition}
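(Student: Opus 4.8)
The plan is to prove the contrapositive: I will assume that some bidder $i$ puts positive probability on a bid $b_{ij} > v_i(j)$ for a specific item $j$, and show that then bidder $i$ must win item $j$ with probability zero, i.e., $\Pr_{\vb \sim \vD}[j \in S_i(\vb)] = 0$. The key observation is that in a P1A a bidder pays his own bid on every item he receives, so whenever $i$ wins item $j$ with a bid strictly exceeding $v_i(j)$, item $j$ contributes a strictly \emph{negative} amount to $i$'s realized payoff from that bundle, by subadditivity (the marginal value $v_i(S_i(\vb)) - v_i(S_i(\vb) \setminus j) \leq v_i(j) < b_{ij}$).

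Concretely, I would first fix $i$ and $j$ and let $c = \max_{\vb_i \sim D_i} b_{ij}$ be the largest bid $i$ ever places on $j$ under $D_i$; suppose for contradiction $c > v_i(j)$ and $\Pr_{\vb \sim \vD}[j \in S_i(\vb)] > 0$. Then I would construct a deviation $D_i'$ for bidder $i$ that is identical to $D_i$ except that on any realization where $i$ would have bid some $b_{ij}$ with $b_{ij} > v_i(j)$, he instead lowers his bid on item $j$ to $v_i(j)$ (or just below it), keeping all other coordinates the same. I would then argue this deviation is a strict improvement in expectation. On realizations where $i$ would have lost $j$ anyway, nothing changes (lowering a losing bid keeps it losing, since the winning bid of someone else is unaffected and still at least as high — here one must be a touch careful about the tie-breaking rule, but lowering to $v_i(j) < $ old bid cannot turn a loss into anything worse). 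On realizations where $i$ would have won $j$ with a bid above $v_i(j)$: either he now still wins $j$ but pays $v_i(j)$ instead of the higher bid, which strictly increases his utility since the value contribution of $j$ is unchanged while the payment strictly drops; or he now loses $j$, in which case his utility weakly increases because, by subadditivity, the bundle $S_i \setminus j$ gives value at least $v_i(S_i) - v_i(j)$ while he saves the full bid $b_{ij} > v_i(j)$ on $j$, so the net change is at least $b_{ij} - v_i(j) > 0$ minus the at most $v_i(j)$ value lost — wait, more carefully: he loses value $v_i(S_i) - v_i(S_i \setminus j) \le v_i(j)$ and saves payment $b_{ij} > v_i(j)$, a strict gain. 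Since the event that $i$ wins $j$ with an overbid has positive probability (it is implied by $c > v_i(j)$ together with $\Pr[j \in S_i(\vb)] > 0$ via a more careful support argument), the expected utility strictly increases, contradicting that $\vD$ is a MNE.

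The main obstacle I anticipate is the measure-theoretic bookkeeping needed to turn "$c > v_i(j)$ and $\Pr[j \in S_i] > 0$" into "$\Pr[i \text{ overbids on } j \text{ and wins } j] > 0$." This does not follow purely formally: it could a priori happen that $i$ overbids only on realizations where he loses $j$ and wins $j$ only on realizations where he bids at or below $v_i(j)$. To handle this I would argue at the level of the bid distribution's support: if on \emph{every} realization where $i$ wins $j$ he bids $\le v_i(j)$ on $j$, then the overbidding realizations (positive probability, by definition of $c$ as a max/sup over the support) are all losing realizations for $j$, and on those I can lower the bid on $j$ to $v_i(j)$ for free — this is still a (weak) improvement and, crucially, it produces a new MNE-candidate in which $c \le v_i(j)$, so without loss of generality we may assume $i$ never bids above $v_i(j)$ on $j$ unless... hmm, this risks circularity. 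The cleaner route is: the proposition only claims a bound on $\max_{\vb_i \sim D_i} b_{ij}$, so I really do need to rule out \emph{any} point in the support of $D_i$ with $b_{ij} > v_i(j)$ when $\Pr[j \in S_i] > 0$. I would argue this by taking such a support point $\vb_i^\circ$ with $b_{ij}^\circ > v_i(j)$; since $\Pr_{\vb \sim \vD}[j \in S_i(\vb)] > 0$, there is positive probability over $\vb_{-i} \sim \vD_{-i}$ that $i$'s winning bid threshold on $j$ is below $b_{ij}^\circ$ (otherwise $i$ could never win $j$ with \emph{any} bid, contradicting $\Pr[j \in S_i] > 0$ — here I use that the winning threshold $i$ faces does not depend on $i$'s own bid except through ties). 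On that positive-probability event, using the point $\vb_i^\circ$ (or a nearby support point) wins $j$ with an overbid, and the argument of the previous paragraph applies. I would flag that the precise tie-breaking convention of the P1A is what makes the "lower the bid" deviation harmless, and cite the standard assumption that ties are broken in a fixed deterministic way.
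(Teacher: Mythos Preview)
Your approach is essentially the paper's: assume both overbidding and winning occur with positive probability, lower the offending bid to $v_i(j)$, and use subadditivity to show that whether $i$ now keeps or loses item $j$ his utility strictly increases. The paper dispatches the ``co-occurrence'' issue you worry about in one line, invoking independence of the $D_i$'s to conclude there is a positive-probability realization of $\vD$ where $i$ simultaneously overbids on $j$ and wins $j$.

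Your attempted resolution of that issue, however, has a genuine gap. You pick an arbitrary support point $\vb_i^\circ$ with $b_{ij}^\circ > v_i(j)$ and claim that with positive probability over $\vb_{-i}$ the threshold $\max_{k\neq i} b_{kj}$ lies below $b_{ij}^\circ$, ``otherwise $i$ could never win $j$ with any bid.'' That parenthetical is false: $i$'s support may contain bids on $j$ strictly larger than $b_{ij}^\circ$, and $i$ could win $j$ only at those. The clean fix is the case split implicit in the paper's use of independence. Take any realization $(\vb_i',\vb_{-i}')$ at which $i$ wins $j$. If $b'_{ij} > v_i(j)$ you are done. If $b'_{ij} \le v_i(j)$, then $\max_{k\neq i} b'_{kj} \le b'_{ij} \le v_i(j) < b_{ij}^\circ$, so at the independent recombination $(\vb_i^\circ,\vb_{-i}')$, which has positive probability, $i$ wins $j$ with an overbid. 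Alternatively, simply choose $\vb_i^\circ$ to attain the maximum $c$; then your parenthetical becomes correct (modulo ties), since no bid in $i$'s support exceeds $c$.
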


Proposition~\ref{ClaimWithProbs} states that if a bidder gets a certain item with positive probability, then he always bids at most his singleton value for that item. The proof is quite easy, since paying for an item more than the corresponding singleton value, decrease the utility.

\begin{proof}[Proof of Proposition~\ref{ClaimWithProbs}]
    Fix a bidder $i$ and an item $j$. We will show this by contradiction: Suppose that at some realization of $D_i$ bidder $i$ bids higher than $v_i(j)$ for item $j$ and that at some other realization of $\vD$ bidder $i$ gets item $j$. Since the distributions in $\vD$ are independent, this means that there exists a realization of $\vD$ where bidder $i$ both gets $j$ and bids for it higher than $v_i(j)$. Now we need to show that if bidder $i$ lowered his bid to $v_i(j)$ he would gain utility.

    In the realizations where $i$ bids high for $j$ and he does not get $j$, he has nothing to lose by lowering his price.

    In the realizations where $i$ bids high for $j$ and gets $j$, if he lowered his price to $v_i(j)$ he would either lose the item, yielding a utility increase of $b_{ij} - v_i(j|S-j)$ or he would simple pay less, increasing his utility by $b_{ij}-v_i(j)$. Both quantities are strictly positive, the first because of subadditivity ($v_i(j|S-j)<v_i(j)$). This means that $\vD$ is not a MNE, which completes the contradiction.
\end{proof}

Having proven Proposition~\ref{ClaimWithProbs}, we are ready to show our main result for stable P1A. From now on we are going to denote with $w_j$ the maximum of the singleton values of the bidders that do not get item $j$ at the optimal allocation, i.e. if $j\in O_i$, then $w_j = \max_{k\neq i}v_k(j)$.

\begin{proof}[Proof of Theorem~\ref{theoremStableP1A}]
    Fix a bidder $i$. We are going to construct a deviating bid for him. First bid $0$ for any items not in $O_i$. Denote with $A$ the subset of $O_i$ that $i$ gets with probability $1$ at the MNE. For the items in $A$, he keeps the same bidding strategy and for the other items $j\in O_i-A$ he bids infinitesimally more than $w_j$, i.e. $\max_{k\neq i}v_k(j)$. This bidding strategy gets $i$ the whole bundle $O_i$. This is obvious for items in $A$. For items $j\in O_i-A$, player $i$ only needs to outbid any other player $k$ that has positive probability to get $j$. Because of Proposition~\ref{ClaimWithProbs} their bid is at most $w_j$, which means that our deviating bids achieve getting these items from them.

    Before using our deviating bid, let us analyze the expected payment of $i$ at the equilibrium:

    \begin{equation}\label{ineq5:7}
        \E_{\vb\sim\vD}\sum_{j\in S_i(\vb)}p_j(\vb)   =
        \E_{\vb\sim\vD}\sum_{j\in S_i(\vb)-A}p_j(\vb) + \E_{\vb\sim\vD}\sum_{j\in A}p_j(\vb)   \geq
        \E_{\vb\sim\vD}\sum_{j\in A}p_j(\vb)
    \end{equation}

    Where the equality holds because for every realization of bids $\vb$, $A\sub S_i(\vb)$ and the inequality because the payments are always non-negative. Now we use the fact that if $i$ uses the deviating bids, he is not going to lose any utility. For the payment at the equilibrium we use \eqref{ineq5:7} as it is a lower bound for the real payment of $i$

    \begin{equation}%\label{ineq5:8}
        \E_{\vb\sim\vD} v_i(S_i(\vb)) - \E_{\vb\sim\vD}\sum_{j\in A}p_j(\vb)   \geq
        v_i(O_i) - \sum_{j\in O_i-A}w_j - \E_{\vb\sim\vD}\sum_{j\in A}p_j(\vb)
    \end{equation}

    Now by rearranging and using Lemma~\ref{LemmaGen} ($w_j<\frac{1}{\g-1}q_i(j)$) we get

    \begin{equation}\label{ineq5:9}
        \E_{\vb\sim\vD} v_i(S_i(\vb))   >
        v_i(O_i) - \frac{1}{\g-1}\sum_{j\in O_i-A}q_i(j)   \geq
        v_i(O_i) - \frac{1}{\g-1}\sum_{j\in O_i}q_i(j)
    \end{equation}

    Where in the second inequality we simply make the sum contain more positive terms. Using now the fact that $v_i(O_i) = \sum_{j\in O_i}q_i(j)$ and by adding \cref{ineq5:9} for all $i$, we complete the proof of the theorem.
\end{proof} 

%Let us note the the real contribution of this theorem is when $\g \geq e+1$, where the PoA is guaranteed above $1-\fe$, as otherwise it is already known that the PoA in P1A for XOS bidders is at least $1-\fe$. Asymptotically the PoA becomes equal to $1$.

\smallskip\noindent{\bf Acknowledgements.} We wish to thank Kyriakos Lotidis and Grigoris Velegkas for many helpful discussions on combinatorial markets with endowed valuations and on the possibility of exploiting endowed valuations in mechanism design.

\bibliographystyle{plain}
\bibliography{references}

\begin{thebibliography}{10}

\bibitem{AMM17}
H.~Angelidakis, K.~Makarychev, and Y.~Makarychev.
\newblock Algorithms for stable and perturbation-resilient problems.
\newblock In {\em Proc. of the 49th {ACM} Symposium on Theory of Computing
  ({STOC} 2017)}, pages 438--451, 2017.

\bibitem{AS19}
S.~Assadi and S.~Singla.
\newblock {Improved Truthful Mechanisms for Combinatorial Auctions with
  Submodular Bidders}.
\newblock In {\em Proc. of the 60th {IEEE} Symposium on Foundations of Computer
  Science ({FOCS}~2019)}, pages 233--248, 2019.

\bibitem{ABS12}
P.~Awasthi, A.~Blum, and O.~Sheffet.
\newblock Center-based clustering under perturbation stability.
\newblock {\em Inf. Process. Lett.}, 112(1-2):49--54, 2012.

\bibitem{BDO18}
M.~Babaioff, S.~Dobzinski, and S.~Oren.
\newblock Combinatorial auctions with endowment effect.
\newblock In {\em Proc. of the 2018 {ACM} Conference on Economics and
  Computation ({EC}~2018)}, pages 73--90, 2018.

\bibitem{BHW16}
M.{-}F. Balcan, N.~Haghtalab, and C.~White.
\newblock {$k$-Center Clustering Under Perturbation Resilience}.
\newblock In {\em Proc. of the 43rd International Colloquium on Automata,
  Languages and Programming ({ICALP}~2016)}, volume~55 of {\em LIPIcs}, pages
  68:1--68:14, 2016.

\bibitem{BL10}
Y.~Bilu and N.~Linial.
\newblock {Are Stable Instances Easy?}
\newblock In {\em Proc. of the 1st Symposium on Innovations in Computer Science
  ({ICS}~2010)}, pages 332--341. Tsinghua University Press, 2010.

\bibitem{BN09}
L.~Blumrosen and N.~Nisan.
\newblock {On the Computational Power of Demand Queries}.
\newblock {\em {SIAM Journal on Computing}}, 39(4):1372--1391, 2009.

\bibitem{CKS08}
G.~Christodoulou, A.~Kov{\'{a}}cs, and M.~Schapira.
\newblock {Bayesian Combinatorial Auctions}.
\newblock In {\em {Proc. of the 35th International Colloquium on Automata,
  Languages and Programming ({ICALP}~2008), Track {A:} Algorithms, Automata,
  Complexity and Games}}, volume 5125 of {\em Lecture Notes in Computer
  Science}, pages 820--832, 2008.

\bibitem{D07}
S.~Dobzinski.
\newblock {Two Randomized Mechanisms for Combinatorial Auctions}.
\newblock In {\em Proc. of the 10th Workshop on Approximation Algorithms for
  Combinatorial Optimization Problems ({APPROX}~2007) and of the 11th
  International Workshop on Randomization and Computation ({RANDOM} 2007)},
  volume 4627 of {\em Lecture Notes in Computer Science}, pages 89--103, 2007.

\bibitem{Dob11}
S.~Dobzinski.
\newblock An impossibility result for truthful combinatorial auctions with
  submodular valuations.
\newblock In {\em Proc. of the 43rd {ACM} Symposium on Theory of Computing
  ({STOC}~2011)}, pages 139--148, 2011.

\bibitem{D16}
S.~Dobzinski.
\newblock {Breaking the Logarithmic Barrier for Truthful Combinatorial Auctions
  with Submodular Bidders}.
\newblock In {\em Proc. of the 48th ACM Symposium on Theory of Computing
  ({STOC}~2016)}, pages 940--948, 2016.

\bibitem{DFTW18}
S.~Dobzinski, M.~Feldman, I.~Talgam{-}Cohen, and O.~Weinstein.
\newblock Welfare and revenue guarantees for competitive bundling equilibrium.
\newblock In {\em Proc. of the 11th Conference on Web and Internet Economics
  ({WINE}~2015)}, volume 9470 of {\em Lecture Notes in Computer Science}, pages
  300--313, 2015.

\bibitem{DNS06}
Shahar Dobzinski, N.~Nisan, and M.~Schapira.
\newblock Truthful randomized mechanisms for combinatorial auctions.
\newblock {\em Journal of Computer and System Sciences}, 78(1):15--25, 2012.

\bibitem{DFKL17}
P.~D{\"u}etting, M.~Feldman, T.~Kesselheim, and B.~Lucier.
\newblock {Prophet Inequalities Made Easy: Stochastic Optimization by Pricing
  Non-Stochastic Inputs}.
\newblock In {\em Proc. of the 58th Symposium on Foundations of Computer
  Science ({FOCS}~2017)}, pages 540--551, 2017.

\bibitem{ERV16}
M.~Englert, H.~R{\"{o}}glin, and B.~V{\"{o}}cking.
\newblock {Smoothed Analysis of the $2$-Opt Algorithm for the General {TSP}}.
\newblock {\em {ACM} Transactions on Algorithms}, 13(1):10:1--10:15, 2016.

\bibitem{EFF20}
T.~Ezra, M.~Feldman, and O.~Friedler.
\newblock A general framework for endowment effects in combinatorial markets.
\newblock In {\em Proc. of the 2020 {ACM} Conference on Economics and
  Computation ({EC}~2020)}, 2020.

\bibitem{Feige}
U.~Feige.
\newblock {On Maximizing Welfare When Utility Functions Are Subadditive}.
\newblock {\em {SIAM} Journal on Computing}, 39(1):122--142, 2009.

\bibitem{FVFocs06}
U.~Feige and J.~Vondr{\'{a}}k.
\newblock {The Submodular Welfare Problem with Demand Queries}.
\newblock {\em Theory of Computing}, 6(1):247--290, 2010.

\bibitem{FGL14}
M.~Feldman, N.~Gravin, and B.~Lucier.
\newblock {Combinatorial Auctions via Posted Prices}.
\newblock In {\em Proc. of the 26th {ACM-SIAM} Symposium on Discrete
  Algorithms}, pages 123--135, 2014.

\bibitem{FKL12}
H.~Fu, R.~Kleinberg, and R.~Lavi.
\newblock Conditional equilibrium outcomes via ascending price processes with
  applications to combinatorial auctions with item bidding.
\newblock In {\em Proc. of the 13th {ACM} Conference on Electronic Commerce
  ({EC}~2012)}, page 586, 2012.

\bibitem{HS01}
J.~Hershberger and S.~Suri.
\newblock {Vickrey Prices and Shortest Paths: What is an Edge Worth?}
\newblock In {\em Proc. of the 42nd Symposium on Foundations of Computer
  Science ({FOCS}~2001)}, pages 252--259, 2001.

\bibitem{KLMM05}
S.~Khot, R.J. Lipton, E.~Markakis, and A.~Mehta.
\newblock {Inapproximability Results for Combinatorial Auctions with Submodular
  Utility Functions}.
\newblock In {\em Poc. of the 1st Workshop on Internet and Network Economics
  ({WINE}~2005)}, volume 3828 of {\em Lecture Notes in Computer Science}, pages
  92--101, 2005.

\bibitem{KV12}
P.~Krysta and B.~V{\"{o}}cking.
\newblock {Online Mechanism Design (Randomized Rounding on the Fly)}.
\newblock In {\em {Proc. of the 39th Colloquium on Automata, Languages and
  Programming ({ICALP}~2012), {Part}~{II}}}, volume 7392 of {\em Lecture Notes
  in Computer Science}, pages 636--647, 2012.

\bibitem{LLN06}
B.~Lehmann, D.~Lehmann, and N.~Nisan.
\newblock Combinatorial auctions with decreasing marginal utilities.
\newblock {\em {Games and Economic Behavior}}, 55(2):270--296, 2006.

\bibitem{Milgrom04}
P.~Milgrom.
\newblock {\em {Putting Auction Theory to Work}}.
\newblock Churchill Lectures in Economics. Cambridge University Press, 2004.

\bibitem{MSV07}
V.S. Mirrokni, M.~Schapira, and J.~Vondr{\'{a}}k.
\newblock Tight information-theoretic lower bounds for welfare maximization in
  {Combinatorial Auctions}.
\newblock In {\em Proc. 9th {ACM} Conference on Electronic Commerce
  ({EC}~2008)}, pages 70--77, 2008.

\bibitem{CAbook}
Y.~Shoham P.C.~Cramton and R.~Steinberg.
\newblock {\em {Combinatorial Auctions}}.
\newblock MIT Press, 2006.

\bibitem{RSB82}
S.J. Rassenti, V.L. Smith, and R.L. Bulfin.
\newblock A combinatorial auction mechanism for airport time slot allocation.
\newblock {\em The Bell Journal of Economics}, pages 402--417, 1982.

\bibitem{Rough14}
T.~Roughgarden.
\newblock {Barriers to Near-Optimal Equilibria}.
\newblock In {\em Proc. of the 55th {IEEE} Symposium on Foundations of Computer
  Science ({FOCS}~2014)}, pages 71--80, 2014.

\bibitem{Rough19}
T.~Roughgarden.
\newblock {Beyond Worst-Case Analysis}.
\newblock {\em Communications of the {ACM}}, 62(3):88--96, 2019.

\bibitem{Rough20}
T.~Roughgarden.
\newblock {\em {Beyond the Worst-Case Analysis of Algorithms}}.
\newblock Cambridge University Press, 2020.

\bibitem{RST16}
T.~Roughgarden, V.~Syrgkanis, and {\'{E}}.~Tardos.
\newblock The price of anarchy in auctions.
\newblock {\em Journal of Artificial Intelligence Ressearch}, 59:59--101, 2017.

\bibitem{RoughOpt}
T.~Roughgarden and I.~Talgam{-}Cohen.
\newblock {Approximately Optimal Mechanism Design}.
\newblock {\em CoRR}, 2018.

\bibitem{ST04}
D.A. Spielman and S.{-}H. Teng.
\newblock Smoothed analysis of algorithms: Why the simplex algorithm usually
  takes polynomial time.
\newblock {\em Journal of the {ACM}}, 51(3):385--463, 2004.

\bibitem{ST09}
D.A. Spielman and S.{-}H. Teng.
\newblock Smoothed analysis: an attempt to explain the behavior of algorithms
  in practice.
\newblock {\em Communications of the {ACM}}, 52(10):76--84, 2009.

\bibitem{ST14}
V.~Syrgkanis and {\'{E}}.~Tardos.
\newblock Composable and efficient mechanisms.
\newblock In {\em Proc. of the 45th Symposium on Theory of Computing
  ({STOC}~2013)}, pages 211--220, 2013.

\bibitem{thaler_2000}
R.~Thaler.
\newblock Toward a positive theory of consumer choice.
\newblock {\em {Journal of Economic Behavior \& Organization}}, 1(1):39--60,
  1980.

\bibitem{VondrakStoc08}
J.~Vondr{\'{a}}k.
\newblock Optimal approximation for the submodular welfare problem in the value
  oracle model.
\newblock In {\em Proc. of the 40th {ACM} Symposium on Theory of Computing
  ({STOC} 2008)}, pages 67--74, 2008.

\end{thebibliography}

\appendix\section{Appendix}

\subsection{Technical Details Missing from the Proof of Theorem~\ref{thm:DSIC_no_exist}}
\label{s:app:exponential}

In this section, we complete the proof of Theorem~\ref{thm:DSIC_no_exist}, working along the lines of the proof of \cite[Lemma~3.10]{Dob11}. We have already showed that there exist a structured submenu of exponential size with the following properties:
\begin{enumerate}
    \item For all $S\in\mS$, bidder 2 can be allocated $S$.
    \item For each $S,T\in\mS$: $|p_S - p_T| \leq \frac{1}{m^5}$.
    \item For all $S,T \subseteq M$ such that $S\in\mS$, $S\subset T$: $p_T - p_S \geq \frac{1}{m^3}$.
    \item For all $S\in\mS$: $p_S\leq m$.
    \item For each $S,T\in\mS$: $|S|=|T|$.
\end{enumerate}

With this submenu, we are ready to create a valuation where it is a hard to find the demand bundle. Given a set $O\in \mS$, we use, as in the proof of \cite[Lemma~3.10]{Dob11}, the following valuation function: 
\begin{equation}\label{eq:magic}
    v^O(S) =
        \begin{cases}
            t\cd|S|, &\textit{if } |S| < k \\
            t\cd k - 1/m^4, &\textit{if } S\in \mS \textit{ and } S\neq O\\
            t\cd k, &\textit{if } S=O \textit{ or } \exists T\in \mS \textit{ s.t. } T\subset S\\
            t\cd(k-\frac{1}{2^{m-|S|}}), &\textit{otherwise}\\
        \end{cases}
\end{equation}
where $t$ is a large number whose value we will set later. To establish the submodularity of $v^O$, we show that for any $S\subset T$ and $j\notin T$, we have that $0 \leq v(S\cup j) - v(S) \geq v(T\cup j) - v(T)$. Since each marginal is at most $t$, any case where $v(S\cup j) - v(S) = t$ is trivial. Now we notice:
\begin{itemize}
    \item If $S\cup j \in \mS$, then $v(S\cup j) - v(S) \geq t-1/m^4$. However the value of the marginal $v(T\cup j) - v(T)$ is one of the following:
    \begin{itemize}
        \item $1/m^4$, if $T\in \mS-\{O\}$.
        \item $0$, if $T = O$.
        \item $\frac{t}{2^{m-|T|}}$, if both $v(T)$ and $v(T\cup j)$ are calculated by the 4th case of the valuation.
        \item $\frac{t}{2^{m-|T|}}$, if $v(T)$ is calculated by the 4th case of the valuation and $v(T\cup j)$ is calculated by the 3rd case of the valuation.
    \end{itemize}
    Because $|T|<m$, in all cases $t-1/m^4$ is greater for a large enough $t$, e.g. $t>2^m$.

    \item If $\exists S'\in \mS \textit{ s.t. } S'\subset S$, then $v(S\cup j) - v(S) = v(T\cup j) - v(T) = 0$.

    \item If $v(S) = t\cd(k-\frac{1}{2^{m-|S|}})$, then $v(S\cup j) - v(S) = \frac{t}{2^{m-|S|}}$. Because $S\sub T$, the value of $v(T\cup j) - v(T)$ is either going to be $0$ or $\frac{t}{2^{m-|T|}}$. In any case the marginal of $S$ is greater.
\end{itemize}

Thus we see that in any case the marginal of $S$ is both positive and greater than the marginal of $T$. This proves the submodularity and monotonicity.

Now we notice that finding bundle $O$ by doing value queries in $v^O(\cd)$ is hard: it requires $|\mS|-1$ queries. This is because for any $O'\in\mS$ distinguishing between $v^O(\cd)$ and $v^{O'}(\cd)$ requires querying either $O$ or $O'$, which in the worst case requires querying almost every bundle in $\mS$. Since $\mA$ does polynomially many value queries, bidder 2 will not be allocated bundle $O$, since the size of $\mS$ is exponential.

Now all that is left to do is to prove is that if bidder 2, who can pick any set $S$ at price $p_S$, has valuation $v^O(\cd)$, then he strictly demands $O$. To this end, we need to prove the following inequality (see also \cite[Claim~3.13]{Dob11}):
\begin{equation}\label{ineq:utility}
    v^O(O) - p_O > v^O(S) - p_S
\end{equation}
for any $S\neq O$. We examine the following cases:
\begin{itemize}
    \item If $|S| < k$ then $v(S) = t\cd|S|$. Manipulating inequality \ref{ineq:utility} leads to proving $t\cd(k-|S|) > p_O - p_S$. The lhs is at least $t$ (since $|S|\leq k-1$) and the rhs is at most $m$ (because of submenu property 4 and $p_S \geq 0$). Since $t$ is as large as we want it to be, it holds that $t > m$.
    \item If $S\in \mS$ then inequality \ref{ineq:utility} becomes $1/m^4 > p_O - p_S$. This indeed holds because of submenu property 2.
    \item If $O\subset S$, then $v(S) = v(O)$. However because of submenu property 3, the price of $S$ is greater than $O$'s. This makes it have strictly less utility.
    \item If $\exists T\in \mS$ s.t. $T\subset S$ ($T\neq O$), then $v(S) = v(T) + 1/m^4$ and also $p_S \geq p_T + 1/m^3$ (submenu property 3). This makes $S$ at less favorable than $T$. However $T$ already has less utility than $O$ because of the second bullet.
    \item If $v(S) = t\cd(k-\frac{1}{2^{m-|S|}})$, then inequality \ref{ineq:utility} becomes $\frac{t}{2^{m-|S|}} > p_O - p_S$. This holds because we can take $t$ as large as we want, making the LHS as large as we want, while at the same time the RHS is at most $m$ (because of submenu property 4 and $p_S \geq 0$).
\end{itemize}

Thus we have proven \eqref{ineq:utility}, for every $S\neq O$. This makes bidder 2 strictly demand the bundle $O$. However the mechanism cannot allocate that bundle with polynomially many queries. This entails that the best strategy for bidder 2 is to bid untruthfully, e.g. bidding according to $v(S) = |S\cap O|$. This makes $\mA$ not DSIC which leads to a contradiction and completes the proof.

\end{document}